\def\tr{\mathrm{tr}}
\newcommand{\norm}[1]{\left\lVert#1\right\rVert}
\newtheorem{mythm}{Theorem}
\newtheorem{mydef}{Definition}
\newtheorem{mylem}{Lemma}
\begin{document}
%
\title{Wishart Mechanism for Differentially Private Principal Components Analysis}
\author{Wuxuan Jiang, Cong Xie \and Zhihua Zhang \\ Department of Computer Science and Engineering \\ Shanghai Jiao Tong University, Shanghai, China \\ jiangwuxuan@gmail.com, \{xcgoner, zhihua\}@sjtu.edu.cn}

\maketitle
\begin{abstract}
\begin{quote}
	We propose a new input perturbation mechanism for publishing a covariance matrix to achieve $(\epsilon,0)$-differential privacy. Our mechanism uses a Wishart distribution to generate matrix noise. In particular, we apply this mechanism to principal component analysis (PCA). Our mechanism is able to keep the positive semi-definiteness of the published covariance matrix. Thus, our approach gives rise to a general publishing framework for input perturbation of a symmetric positive semidefinite matrix. Moreover, compared with the classic Laplace mechanism, our method has better utility guarantee. To the best of our knowledge, the Wishart mechanism is the best input perturbation approach for $(\epsilon,0)$-differentially private PCA. We also compare our work with previous exponential mechanism algorithms in the literature and provide near optimal bound while having more flexibility and less computational intractability.
\end{quote}
\end{abstract}

\section{Introduction}
\label{sec:introd}

Plenty of machine learning tasks deal with sensitive information such as financial and medical data. A common concern regarding data security arises on account of the rapid development of data mining techniques. Several data privacy definitions are proposed in the literature. Among them differential privacy (DP) has been widely used~\cite{dwork2006calibrating}. Differential privacy controls the fundamental quantity
of information that can be revealed with changing one individual. Beyond a concept in database security, differential privacy has been used by many researchers to develop
privacy-preserving learning algorithms \cite{chaudhuri2009privacy,chaudhuri2011differentially,bojarski2014differentially}. Indeed, this class of algorithms is applied to a large number of machine learning models including logistic regression~\cite{chaudhuri2009privacy}, support vector machine~\cite{chaudhuri2011differentially}, random decision tree~\cite{bojarski2014differentially}, etc.
Accordingly, these methods can protect the raw data even though the output and algorithm itself are published.

Differential privacy (DP) aims to hide the individual information while keeping basic statistics of the whole dataset. A simple idea to achieve
this purpose is to add some special noise to the original model. After that, the attacker, who has two outputs generated by slightly different inputs, cannot distinguish whether the output change comes from the artificial noise or input difference. However, the noise might influence the quality of regular performance of the model. Thus, we should carefully trade off between privacy and utility.

No matter what the procedure is, a query, a learning algorithm, a game strategy or something else, we are able to define differential privacy if this procedure takes a dataset as input and returns the corresponding output. In this paper, we study the problem of designing differential private principal component analysis (PCA). PCA reduces the data dimension while keeping the optimal variance. More specifically, it finds a projection matrix by computing a low rank approximation to the sample covariance matrix of the given data points.

Privacy-preserving PCA is a well-studied problem in the literature~\cite{dwork2014analyze,hardt2012beating,hardt2013beyond,hardt2014noisy,blum2005practical,chaudhuri2012near,kapralov2013differentially}. It outputs a noisy projection matrix for dimension reduction while preserving the privacy of any single data point. The extant privacy-preserving PCA algorithms have been devised based on two major features: the notion of differential privacy and the stage of randomization. Accordingly, the privacy-preserving PCA algorithms can be divided into distinct categories.

The notion of differential privacy has two types:  $(\epsilon,0)$-DP (also called pure DP) and $(\epsilon,\delta)$-DP
(also called approximate DP). $(\epsilon,\delta)$-DP is a weaker version of $(\epsilon,0)$-DP as the
former allows the privacy guarantee to be broken with tiny probability (more precisely,  $\delta$). In the seminal work on privacy-preserving PCA~\cite{dwork2014analyze,hardt2012beating,hardt2013beyond,hardt2014noisy,blum2005practical}, the authors used the notion of $(\epsilon,\delta)$-DP. In contrast, there is only a few work~\cite{chaudhuri2012near,kapralov2013differentially}, which is based on $(\epsilon,0)$-DP.

In terms of the stage of randomization, there are two mainstream classes of approaches. The first is randomly computing the eigenspace \cite{hardt2013beyond,hardt2014noisy,chaudhuri2012near,kapralov2013differentially}. The noise is added in the computing procedure. An alternative way is directly adding noise to the covariance matrix. Then one runs the non-private eigenspace computing algorithm to produce the output. This class of approaches is called input perturbation \cite{blum2005practical,dwork2014analyze}. The input perturbation algorithms publish a noisy sample covariance matrix before computing the eigenspace. Thus, any further operation on the noisy covariance matrix does not violate privacy guarantee. So far as the flexibility is concerned, the input perturbation has better performance because it is not limited only to computing eigenspace. Besides, the input perturbation approach is efficient because it merely takes extra efforts on generating the noise. In view of these advantages, our mechanism for privacy-preserving PCA is also based on input perturbation.

\subsection{Related Work}

\citeauthor{blum2005practical} (2005) proposed an early input perturbation framework (named SULQ), and the parameters of noise are refined by \citeauthor{dwork2006calibrating} (2006). \citeauthor{dwork2014analyze} (2014) proved the state-of-the-art utility bounds for $(\epsilon,\delta)$-DP. \citeauthor{hardt2012beating} (2012) provided a better bound under the coherence assumption. In \cite{hardt2013beyond,hardt2014noisy}, the authors used a noisy power method to produce the principal eigenvector iteratively with removing the previous generated ones. \citeauthor{hardt2014noisy} (2014) provided a special case for $(\epsilon,0)$-DP as well.

\citeauthor{chaudhuri2012near} (2012) proposed the first useful privacy-preserving PCA algorithm for $(\epsilon,0)$-DP based on an exponential mechanism~\cite{mcsherry2007mechanism}. \citeauthor{kapralov2013differentially} (2013) argued that the algorithm in \cite{chaudhuri2012near} lacks convergence time guarantee and used heuristic tests to check convergence of the chain, which may affect the privacy guarantee. They also devised a mixed algorithm for low rank matrix approximation. However, their algorithm is quite complicated to implement and takes $O(d^6/\epsilon)$ running time.
Here $d$ is the dimension of the data point.  

Our work is mainly inspired by \citeauthor{dwork2014analyze} (2014). Since they provided the algorithms\ for $(\epsilon,\delta)$-DP, we seek the similar approach for $(\epsilon,0)$-DP with a different noise matrix design. As input perturbation methods, \citeauthor{blum2005practical} (2005) and \citeauthor{dwork2014analyze} (2014) both used the Gaussian symmetric noise matrix for privately publishing a noisy covariance matrix. A reasonable worry is that the published matrix might be no longer positive semidefinite, a normal attribute for a covariance matrix.

\subsection{Contribution and Organization}

In this paper we propose a new mechanism for privacy-preserving PCA that we call \emph{Wishart mechanism}.
The key idea is to add a Wishart  noise matrix to the original sample covariance matrix.
A Wishart matrix is always positive semidefinite, which in turn makes the perturbed
covariance matrix positive semidefinite.
Additionally,  Wishart matrix can be regarded as the scatter matrix of some random Gaussian vectors~\cite{GuptaN:Book:2000}.
Consequently, our Wishart mechanism equivalently adds Gaussian noise to the original data points.

Setting appropriate parameters of Wishart distribution, we derive the $(\epsilon,0)$-privacy guarantee (Theorem~\ref{thm:wishart}). Compared to the present Laplace mechanism, our Wishart mechanism adds less noise (Section~\ref{sec:analysis}), which implies our mechanism always has better utility bound. We also provide a general framework for choosing Laplace or Wishart input perturbation for $(\epsilon,0)$-DP in Section~\ref{sec:analysis}.

Not only using the Laplace mechanism as a baseline, we also conduct theoretical analysis to compare our work with
other privacy-preserving PCA algorithms based on the $(\epsilon,0)$-DP.
With respect to the different criteria, we provide sample complexity bound (Theorem~\ref{thm:rank1}) for comparison with ~\citeauthor{chaudhuri2012near} (2012) and derive the low rank approximation closeness when comparing to ~\citeauthor{kapralov2013differentially} (2013). Other than the principal eigenvector guarantee in \cite{chaudhuri2012near}, we have the guarantee for rank-$k$ subspace closeness (Theorem~\ref{thm:6}). With using a stronger definition of adjacent matrices, we achieve a $k$-free utility bound (Theorem~\ref{thm:kappro}). Converting the lower bound construction in \cite{chaudhuri2012near,kapralov2013differentially} into our case, we can see the Wishart mechanism is near-optimal.

The remainder of the paper is organized as follows. Section~\ref{sec:prelim} gives the notation and definitions used in our paper. Section~\ref{sec:alg} lists the baseline and our designed algorithms. Section~\ref{sec:analysis} provides the thorough analysis on privacy and utility guarantee of our mechanism together with comparison to several highly-related work. Finally, we conclude the work in Section~\ref{sec:conlusion}. Note that we put some proofs and more explanation into the supplementary material.

\section{Preliminaries}
\label{sec:prelim}

We first give some notation that will be used in this paper. Let $I_m$ denote the $m{\times} m$ identity matrix.
Given an $m{\times} n$ real matrix $Z=[Z_{ij}]$,  let its full singular value decomposition (SVD) as
$Z=U\Sigma V^T$, where $U \in \mathbb{R}^{m\times m}$ and $V\in \mathbb{R}^{n\times n}$ are orthogonal (i.e., $ U^T U = I_m$ and
$V^T V = I_n$), and $\Sigma \in \mathbb{R}^{m\times n}$ is a diagonal matrix  with the $i$th diagonal entry $\sigma_i$ being the $i$th largest singular value of $Z$. Assume that the rank of $Z$ is $\rho $ ($\leq \min(m, n)$). This implies that $Z$ has $\rho$ nonzero singular values.
Let $U_{k}$ and $V_{k}$ be the first $k$ ($<\rho$) columns of $U$ and $V$, respectively,
and $\Sigma_{k}$ be the $k\times k$ top sub-block of $\Sigma$.
Then the $m\times n$ matrix $Z_k = U_{k} \Sigma_{k} V_{k}^T$ is the best rank-$k$ approximation to $Z$.

The Frobenius norm of $Z$ is defined as $\norm{Z}_F=\sqrt{\sum\limits_{i,j} Z^2_{ij}} = \sqrt{\sum\limits_{i=1}^{\rho} \sigma^2_{i}}$,
the spectral norm is defined as $\norm{Z}_2 = \max\limits_{x\neq 0}\frac{\norm{Z x}_2}{\norm{x}_2} = \sigma_1$, the nuclear norm is define as $\norm{Z}_*=\sum\limits_{i=1}^{\rho} \sigma_{i}$, and the $\ell_{1,1}$ norm is defined as $\norm{Z}_{1,1}=\sum\limits_{i,j}|Z_{ij}|$.

Given a set of $n$ raw data points $X =[x_1,\cdots, x_n]$ where $x_i\in \mathbb{R}^d$,
we  consider the problem of publishing a noisy empirical sample covariance matrix for doing PCA.
Following previous work on privacy-preserving PCA, we also assume $\norm{x_i}_2\leq1$.
The standard PCA computes the  sample covariance matrix of the raw data $A=\frac{1}{n}XX^T=\frac{1}{n}\sum_{i=1}^{n}x_i x^T_i$.
Since $A$ is a $d\times d$ symmetric positive semidefinite matrix, its SVD is equivalent to the spectral decomposition. That is,
$A=V\Sigma V^T$.
PCA  uses $V_k$ as  projection matrix to compute the low-dimensional representation of raw data: $Y \triangleq V_k^T X$.

In this work we use Laplace and Wishart distributions, which are defined as follows.
\begin{mydef}
	A random variable $z$ is said to have a Laplace distribution $z \sim Lap(\mu,b)$, if its probability density function is
	\begin{equation*}
	p(z) = \frac{1}{2b} \exp(-\frac{|z-\mu|}{b}).
	\end{equation*}
\end{mydef}

\begin{mydef}[\cite{GuptaN:Book:2000}]
	A $d\times d$ random symmetric positive definite matrix $W$ is said to have a Wishart distribution $W \sim W_d(m, C)$, if its probability density function is
	\begin{equation*}
	p(W)= \frac{|W|^{\frac{m-d-1}{2}}}{2^{\frac{md}{2}}|C|^{\frac{m}{2}}\Gamma_d(\frac{m}{2})}\exp(-\frac{1}{2}\tr(C^{-1} W)),
	\end{equation*}
	where $m>d-1$ and $C$ is a $d\times d$ positive definite matrix.
\end{mydef}

Now we introduce the formal definition of differential privacy.
\begin{mydef}
	A randomized mechanism M takes a dataset $D$ as input and outputs a structure $s \in R$, where $R$ is the range of $M$.
	For any two adjacent datasets  $D$ and $\hat{D}$ (with only one distinct entry), $M$ is said to be $(\epsilon,0)$-differential private if for all $S\subseteq R$ we have
	\begin{equation*}
	\Pr\{ M(D)\in S \} \leq e^\epsilon \Pr\{ M(\hat{D})\in S \},
	\end{equation*}
	where $\epsilon>0$ is a small parameter controlling the strength of privacy requirement.
\end{mydef}

This definition actually sets limitation on the similarity of output probability distributions for the given similar inputs.
Here the adjacent datasets can have several different interpretations. In the scenario of privacy-preserving PCA, our definition is as follows.
Two datasets $X$ and $\hat{X}$  are \emph{adjacent} provided $X=[x_1, \dots, x_{i-1}, x_i, x_{i+1}, \dots, x_n]$  and $\hat{X}=[x_1, \dots, x_{i-1}, \hat{x}_i, x_{i+1}, \dots, x_n]$ for $x_i \neq \hat{x}_i$.
It should be pointed out that our definition of adjacent datasets is slightly different from ~\cite{kapralov2013differentially},
which leads to significant difference on utility bounds. We will give more specifically discussions  in Section~\ref{sec:analysis}.

We also give the definition of $(\epsilon, \delta)$-differential privacy.
This notion requires less privacy protection so that it often brings better utility guarantee.

\begin{mydef}
	A randomized mechanism $M$ takes a dataset as input and outputs a structure $s \in R$, where $R$ is the range of $M$.
	For any two adjacent datasets $D$ and $\hat{D}$ (with  only one distinct entry), $M$ is said to be $(\epsilon, \delta)$-differential private if for all $S\subseteq R$ we have
	\begin{equation*}
	\Pr[M(D)\in S]\leq e^\epsilon \Pr[M(\hat{D})\in S]+\delta.
	\end{equation*}
\end{mydef}

Sensitivity analysis
is a general approach to achieving differential privacy. The following definitions show the two typical kinds of sensitivity.

\begin{mydef}
	The $\ell_1$ sensitivity is defined as
	\begin{equation*}
	s_1(M)=\max_{d(D,\hat{D})=1}\norm{M(D)-M(\hat{D})}_1.
	\end{equation*}
	The $\ell_2$ sensitivity is defined as
	\begin{equation*}
	s_2(M)=\max_{d(D,\hat{D})=1}\norm{M(D)-M(\hat{D})}_2.
	\end{equation*}
\end{mydef}

The sensitivity describes the possible largest change as a result of individual data entry replacement. The $\ell_1$ sensitivity is used in Laplace Mechanism for $(\epsilon,0)$-differential privacy, while the $\ell_2$ sensitivity is used in Gaussian Mechanism for $(\epsilon,\delta)$-differential privacy. We list the two mechanisms for comparison.

\begin{mythm}[Laplace Mechanism] \label{thm:lap}
	Let $\lambda>s_1(M)/\epsilon$. Add Laplace noise $Lap(0,\lambda)$ to each dimension of $M(D)$. This mechanism provides $(\epsilon,0)$-differential privacy.
\end{mythm}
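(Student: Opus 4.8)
The plan is the standard density-ratio argument for additive Laplace noise. First I would write the output of $M$ as a vector $M(D)\in\mathbb{R}^k$, treating a matrix-valued output as its vectorization so that ``each dimension'' is a single coordinate and $\norm{\cdot}_1$ in the definition of $s_1$ is the sum of absolute values of the entries. The perturbed mechanism is $\tilde M(D)=M(D)+(z_1,\dots,z_k)$ with $z_j\overset{\mathrm{iid}}{\sim}Lap(0,\lambda)$. Because the noise coordinates are independent, the density of $\tilde M(D)$ at a point $y=(y_1,\dots,y_k)$ factorizes as $p_D(y)=\prod_{j=1}^k\frac{1}{2\lambda}\exp\!\big(-|y_j-M(D)_j|/\lambda\big)$, and likewise $p_{\hat D}(y)$ for an adjacent dataset $\hat D$.

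Next I would bound the ratio of these densities pointwise. For every $y$,
\[
\frac{p_D(y)}{p_{\hat D}(y)}=\exp\!\left(\frac{1}{\lambda}\sum_{j=1}^k\big(|y_j-M(\hat D)_j|-|y_j-M(D)_j|\big)\right)\le\exp\!\left(\frac{1}{\lambda}\sum_{j=1}^k|M(D)_j-M(\hat D)_j|\right),
\]
where the inequality is the reverse triangle inequality applied coordinatewise (so that the $y$-dependence cancels). The remaining sum is exactly $\norm{M(D)-M(\hat D)}_1\le s_1(M)$ by the definition of $\ell_1$ sensitivity, hence the ratio is at most $\exp(s_1(M)/\lambda)$, which is strictly less than $e^\epsilon$ whenever $\lambda>s_1(M)/\epsilon$.

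Finally I would integrate this pointwise bound over an arbitrary measurable $S\subseteq\mathbb{R}^k$: $\Pr[\tilde M(D)\in S]=\int_S p_D(y)\,dy\le e^\epsilon\int_S p_{\hat D}(y)\,dy=e^\epsilon\Pr[\tilde M(\hat D)\in S]$, which is precisely the $(\epsilon,0)$-differential privacy condition. There is no deep obstacle; the only points needing care are (i) that adding the noise \emph{independently} per coordinate is what makes the joint density factorize, so the exponent collapses to the $\ell_1$ norm rather than something larger, and (ii) that the pointwise density inequality transfers to an inequality of probabilities for arbitrary events by monotonicity of the integral. A minor remark is that the strict bound $\lambda>s_1(M)/\epsilon$ gives the ratio $<e^\epsilon$, which of course suffices for the claimed $\le e^\epsilon$.
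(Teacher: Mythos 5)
Your proof is correct and is the standard density-ratio argument for the Laplace mechanism; the paper itself states Theorem~\ref{thm:lap} as a known result from the differential-privacy literature without giving a proof, and the argument it implicitly relies on (and echoes in its proof of Theorem~\ref{thm:wishart}, where the same pointwise ratio-of-densities strategy is carried out for Wishart noise) is exactly the one you wrote. The two points you flag --- factorization of the joint density so the exponent collapses to the $\ell_1$ norm, and passing from the pointwise density bound to arbitrary measurable events by integration --- are indeed the only places where care is needed, and you handle both correctly.
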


\begin{mythm}[Gaussian Mechanism] \label{thm:gauss}
	For $c^2>2\ln(1.25/\delta)$, let $\sigma>cs_2(M)/\epsilon$. Add Gaussian noise $N(0,\sigma^2)$ to each dimension of $M(D)$. This mechanism provides $(\epsilon,\delta)$-differential privacy.
\end{mythm}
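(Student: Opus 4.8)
The plan is to reproduce the standard privacy-loss analysis of the Gaussian mechanism, reducing the finite-dimensional output to a one-dimensional Gaussian tail estimate. Because $(\epsilon,\delta)$-differential privacy is a worst-case guarantee over all pairs of adjacent inputs and all measurable $S$, it suffices to compare, for fixed adjacent $D,\hat{D}$, the two output laws $p = N(M(D),\sigma^2 I)$ and $\hat p = N(M(\hat{D}),\sigma^2 I)$, where the covariance is spherical because the noise is added coordinatewise. Using translation invariance I would shift coordinates so that $M(D)=0$ and $M(\hat{D})=v$, and by the definition of $\ell_2$ sensitivity $\norm{v}_2 \le s_2(M)=:\Delta$; using the rotational invariance of the isotropic Gaussian I would then rotate so that $v$ is supported on a single coordinate, which makes the problem effectively one-dimensional.

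Next I would introduce the privacy-loss random variable $L(y) = \ln\bigl(p(y)/\hat p(y)\bigr)$. Substituting the Gaussian densities, the quadratic exponents cancel and one gets $L(y) = (\norm{v}_2^2 - 2\langle y,v\rangle)/(2\sigma^2)$, which is affine in $y$; hence when $y\sim p = N(0,\sigma^2 I)$ the quantity $\langle y,v\rangle$ is $N(0,\sigma^2\norm{v}_2^2)$ and $L(y)$ is Gaussian with mean $\norm{v}_2^2/(2\sigma^2)$ and variance $\norm{v}_2^2/\sigma^2$.

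Then I would invoke the standard splitting argument: for any measurable $S$, partition it into the part on which $p(y)\le e^\epsilon \hat p(y)$, whose $p$-mass is at most $e^\epsilon \Pr[\hat p\in S]$, and the part on which $L(y)>\epsilon$, whose $p$-mass is at most $\Pr_{y\sim p}[L(y)>\epsilon]$. So the theorem reduces to showing $\Pr_{y\sim p}[L(y)>\epsilon]\le\delta$. Writing $Z = \langle y,v\rangle/(\sigma\norm{v}_2)\sim N(0,1)$, the event $L(y)>\epsilon$ becomes $Z < \norm{v}_2/(2\sigma) - \epsilon\sigma/\norm{v}_2$; using $\norm{v}_2\le\Delta$, $\sigma>c\Delta/\epsilon$, and the standard regime $\epsilon\le1$, the threshold is bounded above by a negative quantity of size at least roughly $c$, so the probability equals $\Pr[Z>t]$ for some $t$ with $t\gtrsim c$. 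Finally I would apply the Gaussian tail bound $\Pr[Z>t]\le \frac{1}{t\sqrt{2\pi}}e^{-t^2/2}$ and check that the hypothesis $c^2>2\ln(1.25/\delta)$ forces the right-hand side below $\delta$.

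The step I expect to be the real work is this last tail estimate: one has to control the polynomial prefactor $1/(t\sqrt{2\pi})$ together with the lower-order term $\norm{v}_2/(2\sigma)$ in the threshold, and track exactly how large $c$ must be relative to $\ln(1/\delta)$; the clean constant $1.25$ — rather than a messier sufficient condition — is precisely what is obtained after absorbing both effects, and the computation tacitly uses $\epsilon\le1$. Everything else (the two reductions, the density computation, the splitting lemma) is routine. Since this statement is quoted from the literature only as a baseline for comparison, in the paper it would simply be cited rather than proved in full.
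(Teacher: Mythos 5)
Your proposal is the standard Dwork--Roth argument for the Gaussian mechanism, and its steps (reduction to one dimension by translation and rotation invariance, the affine privacy-loss random variable, the split of $S$ into the low-loss and high-loss parts, and the Gaussian tail estimate under $c^2>2\ln(1.25/\delta)$ with the tacit assumption $\epsilon\le 1$) are all correct. The paper itself gives no proof of this theorem --- it is stated in the preliminaries purely as a quoted baseline for comparison with the Laplace mechanism, exactly as you anticipated in your closing remark --- so there is nothing in the paper to compare your argument against.
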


The above mechanisms are all perturbation methods. Another widely used method is exponential mechanism~\cite{mcsherry2007mechanism} which is based on sampling techniques.

\section{Algorithms}
\label{sec:alg}

First we look at a general framework of privacy-preserving PCA.
According to the definition of differential privacy, a privacy-preserving PCA takes the raw data matrix $X$ as input
and then calculates the sample covariance matrix $A=\frac{1}{n}XX^T$.
Finally, it computes the top-$k$ subspace of $A$ as the output.

The traditional approach adds noise to the computing procedure. For example, \citeauthor{chaudhuri2012near} (2012) and \citeauthor{kapralov2013differentially} (2013) used a sampling based mechanism during computing eigenvectors to obtain approximate results. Our mechanism adds noise in the first stage, publishing $A$ in a differential private manner. Thus, our mechanism takes $X$ as input and outputs $A$. Afterwards we follow the standard PCA to compute the top-$k$ subspace. This can be seen as a differential private preprocessing procedure.

Our baseline is the Laplace mechanism (Algorithm~\ref{alg:lap} and Theorem~\ref{thm:lap}). To the best of our knowledge, Laplace mechanism is the only input perturbation method for $(\epsilon,0)$-DP PCA.
Since this private procedure ends before computing the subspace, this shows $M(D)=\frac{1}{n}DD^T$ in sensitivity definition.

\begin{algorithm}[htb]
	\caption{Laplace input perturbation}
	\label{alg:lap}
	\begin{algorithmic}[1]
		\Require
		Raw data matrix $X\in\mathbb{R}^{d\times n}$;
		Privacy parameter $\epsilon$;
		Number of data $n$;
		\State Draw $\frac{d^2+d}{2}$ i.i.d. samples from $Lap(0,\frac{2d}{n\epsilon})$, then form a symmetric matrix $L$. These samples are put in the upper triangle part. Each entry in lower triangle part is copied from the opposite position.
		\State Compute $A=\frac{1}{n}XX^T$;
		\State Add noise $\hat{A}=A+L$;
		\Ensure $\hat{A}$;
	\end{algorithmic}
\end{algorithm}

Note that to make $\hat{A}$  be symmetric,  we use a symmetric matrix-variate Laplace distribution in Algorithm~\ref{alg:lap}.
However, this mechanism cannot guarantee the  positive semi-definiteness of $\hat{A}$, a desirable attribute for a covariance matrix.
This motivates us to use a Wishart noise  alternatively, giving rise to the Wishart mechanism
in Algorithm~\ref{alg:wishart}.

\begin{algorithm}[htb]
	\caption{Wishart input perturbation}
	\label{alg:wishart}
	\begin{algorithmic}[1]
		\Require
		Raw data matrix $X\in\mathbb{R}^{d\times n}$;
		Privacy parameter $\epsilon$;
		Number of data $n$;
		\State Draw a sample $W$ from $W_d(d+1,C)$, where $C$ has $d$ same eigenvalues equal to $\frac{3}{2n\epsilon}$;
		\State Compute $A=\frac{1}{n}XX^T$;
		\State Add noise $\hat{A}=A+W$;
		\Ensure $\hat{A}$;
	\end{algorithmic}
\end{algorithm}

\section{Analysis}
\label{sec:analysis}

In this section, we are going to conduct theoretical analysis of Algorithms~\ref{alg:lap} and \ref{alg:wishart}
under the framework of differential private matrix publishing. The theoretical support has two parts: privacy and utility guarantee. The former is the essential requirement for privacy-preserving algorithms and the latter tells how well the algorithm works against a non-private version. Chiefly, we list the valuable theorems and analysis. All the technical proofs omitted can be found in the supplementary material.

\subsection{Privacy guarantee}\label{sec:pg}

We first show that both algorithms satisfy privacy guarantee.
Suppose there are two adjacent datasets $X=[x_1, \dots, v, \dots, x_n]\in \mathbb{R}^{d\times n}$ and $\hat{X}=[x_1, \dots, \hat{v}, \dots, x_n] \in \mathbb{R}^{d\times n}$ where $v \neq \hat{v}$ (i.e.,  only $v$ and $\hat{v}$ are distinct).
Without loss of generality, we further assume that  each data vector has the $\ell_2$ norm at most 1.

\begin{mythm} \label{thm:alg1}
	Algorithm~\ref{alg:lap} provides $(\epsilon,0)$-differential privacy.
\end{mythm}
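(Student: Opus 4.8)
The plan is to reduce Theorem~\ref{thm:alg1} to the Laplace Mechanism (Theorem~\ref{thm:lap}): all that is needed is to bound the $\ell_1$ sensitivity of the map $M(X)=\frac{1}{n}XX^{T}$. Algorithm~\ref{alg:lap} draws fresh noise only for the $\frac{d^{2}+d}{2}$ entries of the upper triangle and copies them below the diagonal, so the coordinates that are actually randomized are exactly those upper-triangular entries; the mirrored entries are a deterministic function of them and contribute nothing to the privacy analysis. Hence it suffices to bound $\sum_{i\le j}\lvert M(X)_{ij}-M(\hat X)_{ij}\rvert$ uniformly over adjacent pairs $X,\hat X$.

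First I would write the difference explicitly. Since $X$ and $\hat X$ coincide in every column except that one column $v$ is replaced by $\hat v$, we get $M(X)-M(\hat X)=\frac{1}{n}\bigl(vv^{T}-\hat v\hat v^{T}\bigr)$. I would then bound the $\ell_{1,1}$ norm of this matrix (which dominates the upper-triangular sum) by the triangle inequality, $\norm{vv^{T}-\hat v\hat v^{T}}_{1,1}\le\norm{vv^{T}}_{1,1}+\norm{\hat v\hat v^{T}}_{1,1}$, and then use the elementary identity/inequality $\norm{vv^{T}}_{1,1}=\sum_{i,j}\lvert v_i\rvert\lvert v_j\rvert=\norm{v}_{1}^{2}\le d\,\norm{v}_{2}^{2}\le d$, where the middle step is Cauchy--Schwarz ($\norm{v}_{1}\le\sqrt{d}\,\norm{v}_{2}$) and the last uses the standing assumption $\norm{v}_{2}\le 1$; the same holds for $\hat v$. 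This gives $\norm{M(X)-M(\hat X)}_{1,1}\le\frac{2d}{n}$, hence $s_{1}(M)\le\frac{2d}{n}$ (in fact strictly less, since separating the diagonal and off-diagonal contributions gives the sharper bound $\frac{d+1}{n}$).

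The conclusion is then immediate: Algorithm~\ref{alg:lap} adds independent $Lap(0,\lambda)$ noise with $\lambda=\frac{2d}{n\epsilon}>s_{1}(M)/\epsilon$ to each randomized coordinate, so Theorem~\ref{thm:lap} delivers $(\epsilon,0)$-differential privacy. If one prefers a self-contained argument, the same conclusion follows by bounding the density ratio directly: viewing an output as a point in the $\frac{d^{2}+d}{2}$-dimensional space of upper-triangular values, the product of per-coordinate Laplace densities yields $\Pr\{M(X)\in S\}\le\exp\!\bigl(\norm{M(X)-M(\hat X)}_{1,1}/\lambda\bigr)\,\Pr\{M(\hat X)\in S\}\le e^{\epsilon}\Pr\{M(\hat X)\in S\}$ for every measurable $S$.

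I expect the only genuinely delicate point to be the bookkeeping forced by the symmetry constraint — confirming that the sensitivity must be measured over precisely the independently perturbed upper-triangular coordinates, and that the deterministic mirroring of the lower triangle requires no separate treatment — together with correctly picking up (and then absorbing into $\lambda$) the dimension factor $d$ produced by the $\ell_{1}$-versus-$\ell_{2}$ comparison for the swapped column. The remaining steps are routine norm manipulations.
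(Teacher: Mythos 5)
Your proposal is correct and follows essentially the same route as the paper: reduce to Theorem~\ref{thm:lap} by bounding the $\ell_1$ sensitivity of $M(X)=\frac{1}{n}XX^T$, writing the difference as $\frac{1}{n}(vv^T-\hat v\hat v^T)$ and using the triangle inequality together with $\norm{v}_1^2\le d\norm{v}_2^2\le d$ to get $s_1(M)\le\frac{2d}{n}$. Your extra bookkeeping about the upper-triangular coordinates being the only independently randomized ones (and the resulting sharper $(d+1)/n$ bound) is a small refinement the paper omits, but it does not change the argument.
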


This theorem can be quickly proved by some simple derivations so we put the proof in the supplementary material.

\begin{mythm} \label{thm:wishart}
	Algorithm~\ref{alg:wishart} provides $(\epsilon,0)$-differential privacy.
\end{mythm}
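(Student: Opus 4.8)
The plan is to verify the definition of $(\epsilon,0)$-differential privacy directly, by comparing the probability density of the published matrix $\hat A$ under the two adjacent inputs. Write $A=\frac1n XX^T$ and $\hat A=\frac1n \hat X\hat X^T$ for the two sample covariance matrices, and set $\Delta=A-\hat A=\frac1n(vv^T-\hat v\hat v^T)$. Since Algorithm~\ref{alg:wishart} returns $A+W$ with $W\sim W_d(d+1,C)$ and $C=\frac{3}{2n\epsilon}I_d$, the output distribution of $M(X)$ has, on the cone $\{B:B-A\succ 0\}$, density equal to the Wishart density evaluated at $B-A$, and similarly for $M(\hat X)$ on $\{B:B-\hat A\succ 0\}$.

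The first step — and the reason the degrees-of-freedom parameter is taken to be exactly $d+1$ — is that substituting $m=d+1$ into the Wishart density makes the exponent $\frac{m-d-1}{2}$ of $|W|$ vanish, so the density collapses to $p(W)=\kappa\exp(-\tfrac12\tr(C^{-1}W))$ with $\kappa$ a constant independent of $W$ (hence the same for $X$ and $\hat X$); with $C^{-1}=\tfrac{2n\epsilon}{3}I_d$ this is simply $\kappa\exp(-\tfrac{n\epsilon}{3}\tr W)$. Consequently, for any $B$ in the common support,
\[
\frac{p_{M(X)}(B)}{p_{M(\hat X)}(B)}=\exp\!\Big(\tfrac12\tr\big(C^{-1}(A-\hat A)\big)\Big)=\exp\!\Big(\tfrac{\epsilon}{3}\big(\norm{v}_2^2-\norm{\hat v}_2^2\big)\Big),
\]
where I used $\tr(C^{-1}\Delta)=\tfrac{2n\epsilon}{3}\cdot\tfrac1n(\norm{v}_2^2-\norm{\hat v}_2^2)$ together with $\tr(vv^T)=\norm{v}_2^2$. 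The only inequality needed now is the ``sensitivity'' bound $|\norm{v}_2^2-\norm{\hat v}_2^2|\le 1$, immediate from $\norm{v}_2,\norm{\hat v}_2\le 1$ (equivalently $\norm{\Delta}_*\le \tfrac2n$); it gives a density ratio at most $e^{\epsilon/3}\le e^\epsilon$, and by symmetry the reverse ratio is bounded the same way. Integrating over $S$ then yields $\Pr[M(X)\in S]\le e^\epsilon\Pr[M(\hat X)\in S]$ restricted to the region where both densities are positive.

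The step I expect to be the real obstacle is the mismatch of supports: $M(X)$ lives on the shifted cone $A+\{W\succ 0\}$ and $M(\hat X)$ on $\hat A+\{W\succ 0\}$, and since $\Delta=A-\hat A$ is a difference of two rank-one positive semidefinite matrices it is in general indefinite, so neither cone contains the other. A pure-DP bound must therefore also control sets that meet one support but not the other, which the pointwise density comparison above does not see. Making this rigorous is where the work lies — one has to argue that this ``boundary'' region contributes negligibly, exploiting that $\norm{\Delta}_2\le\tfrac1n$ and that $W$ is strictly positive definite, and it is presumably here that the slack between the $\epsilon/3$ obtained above and the target $\epsilon$ gets spent. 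I would budget most of my effort on this support argument; the density calculation itself is a one-liner once the $m=d+1$ simplification is noticed.
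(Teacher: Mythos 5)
Your density-ratio computation is exactly the paper's core argument. The paper likewise writes the ratio as $p(W=W_0)/p(W=W_0+\Delta)=\exp[\tfrac12\tr(C^{-1}\Delta)]$ (implicitly using the $m=d+1$ cancellation of the determinant factor that you make explicit), but then bounds the trace via Von Neumann's inequality, $|\tr(C^{-1}\Delta)|\le\norm{C^{-1}}_2\norm{\Delta}_*$, together with $\norm{\Delta}_*\le 3/n$; this is what produces the scale $\tfrac{3}{2n\epsilon}$ and the final $e^\epsilon$. Your exact evaluation $\tfrac12\tr(C^{-1}\Delta)=\tfrac{\epsilon}{3}(\norm{v}_2^2-\norm{\hat v}_2^2)$ is correct because $C$ is a multiple of the identity, and it yields the sharper bound $e^{\epsilon/3}$; that part of your argument is a legitimate (indeed slightly stronger) variant of the paper's.

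The support mismatch you flag in your last paragraph is a genuine gap --- and it is equally a gap in the paper's own proof, which assumes the output can be written both as $A+W_0$ and as $\hat A+(W_0+\Delta)$ with both $W_0$ and $W_0+\Delta$ positive definite, and never considers the complementary event. But be aware that your proposed repair (showing the boundary region ``contributes negligibly'') cannot close it: $(\epsilon,0)$-DP must hold for \emph{every} measurable $S$, and the set $S=\{B: B\succ A,\ B\not\succ\hat A\}$ satisfies $\Pr[M(\hat X)\in S]=0$ while $\Pr[M(X)\in S]=\Pr[W+\Delta\not\succ 0]>0$ whenever $\Delta$ has a negative eigenvalue $-\mu$ with unit eigenvector $u$, since $u^TWu\sim(u^TCu)\,\chi^2_{d+1}$ has positive density on $(0,\infty)$ and hence $\Pr[u^TWu<\mu]>0$. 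Because $vv^T-\hat v\hat v^T$ is indefinite for generic $v\neq\hat v$, the likelihood ratio is unbounded on such sets, and a ``small mass'' argument can only deliver $(\epsilon,\delta)$-DP with $\delta$ equal to that mass. So the obstacle you correctly identified is not a technicality to be polished away: it cannot be resolved without changing the mechanism or weakening the privacy claim, and the paper's proof does not resolve it either --- it simply does not see it.
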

\begin{proof}
	Assume the outputs for the adjacent inputs $X$ and $\hat{X}$ are identical (denoted $A+W_0$). Here $A=\frac{1}{n}XX^T$ and $\hat{A}=\frac{1}{n}\hat{X}\hat{X}^T$. We define the difference matrix $\Delta \triangleq  A-\hat{A}=\frac{1}{n}(vv^T-\hat{v}\hat{v}^T)$.
	Actually the privacy guarantee is to bound the following term:
	\begin{equation*}
	\begin{aligned}
	&\frac{p(A+W=A+W_0)}{p(\hat{A}+W=A+W_0)}=\frac{p(W=W_0)}{p(W=A+W_0-\hat{A})} \\
	&=\frac{p(W=W_0)}{p(W=W_0+\Delta)}
	\end{aligned}
	\end{equation*}
	As $W\sim W_d(d+1,C)$, we have that
	\begin{equation*}
	\begin{aligned}
	&\frac{p(W=W_0)}{p(W=W_0+\Delta)}
	=\frac{\exp[-\frac{1}{2}\tr(C^{-1}W_0)]}{\exp[-\frac{1}{2}\tr(C^{-1}(W_0+\Delta))]}\\
	&=\exp [\frac{1}{2}\tr(C^{-1}(W_0+\Delta))-\tr(C^{-1}W_0)] \\
	&=\exp [\frac{1}{2}\tr(C^{-1}\Delta)].
	\end{aligned}
	\end{equation*}
	Then apply Von Neumann's trace inequality: For matrices $A,B\in\mathbb{R}^{d\times d}$, denote their $i$th-largest singular value as $\sigma_i(\cdot)$. Then $|\tr(AB)|\leq\sum_{i=1}^{d}\sigma_i(A)\sigma_i(B)$. So that
	\begin{equation} \label{eqn:01}
	\begin{aligned}
	&\exp [\frac{1}{2}\tr(C^{-1}\Delta)]\leq \exp[\frac{1}{2}\sum_{i=1}^{d}\sigma_i(C^{-1})\sigma_i(\Delta)] \\
	&\leq\exp[\frac{1}{2}\norm{C^{-1}}_2\norm{\Delta}_*].
	\end{aligned}
	\end{equation}
	Since $\Delta=A-\hat{A}=\frac{1}{n}(vv^T-\hat{v}\hat{v}^T)$ has rank at most 2, and by singular value inequality $\sigma_{i+j-1}(A+B)\leq\sigma_i(A)+\sigma_j(B)$, we can bound $\norm{\Delta}_*$:
	\begin{equation*}
	\begin{aligned}
	&n\norm{\Delta}_* \leq \sigma_1(vv^T)+\sigma_1(-\hat{v}\hat{v}^T)+\max\{\sigma_1(vv^T) \\
	&+\sigma_2(-\hat{v}\hat{v}^T),\sigma_2(vv^T)+\sigma_1(-\hat{v}\hat{v}^T)\} \\
	&=\sigma_1(vv^T)+\sigma_1(\hat{v}\hat{v}^T)+\max\{\sigma_1(vv^T),\sigma_1(\hat{v}\hat{v}^T)\} \\
	&\leq 3\max\sigma_1(vv^T)=3\max\norm{vv^T}_2 \\
	&=3\max\norm{v}^2_2\leq 3.
	\end{aligned}
	\end{equation*}
	In Algorithm~\ref{alg:wishart}, the scale matrix $C$ in Wishart distribution has $d$ same eigenvalues equal to $\frac{3}{2n\epsilon}$, which implies $\norm{C^{-1}}_2=\frac{2n\epsilon}{3}$. Substituting  these terms in Eq. \eqref{eqn:01} yields
	\begin{equation*}
	\begin{aligned}
	&\frac{p(A+W=A+W_0)}{p(\hat{A}+W=A+W_0)}\leq\exp[\frac{1}{2}\norm{C^{-1}}_2\norm{\Delta}_*] \\
	&\leq\exp[\frac{1}{2}\cdot\frac{2n\epsilon}{3}\cdot\frac{3}{n}]=e^\epsilon.
	\end{aligned}
	\end{equation*}
\end{proof}

\subsection{Utility guarantee}

Then we give bounds about how far the noisy results are from optimal. Since the Laplace and Wishart mechanisms are both input perturbation methods, their analyses are similar.

In order to ensure privacy guarantee, we add a noise matrix to the input data. Such noise may have effects on the property of the original matrix. For input perturbation methods, the \emph{magnitude} of the noise matrix directly determines how large the effects are.
For example, if the \emph{magnitude} of the noise matrix is even larger than data, the matrix after perturbation is surely covered by noise. Better utility bound means less noise added. We choose the spectral norm of the noise matrix to measure its \emph{magnitude}. Since we are investigating the privacy-preserving PCA problem, the usefulness of the  subspace of the top-$k$ singular vectors is mainly cared.

The noise matrix in the Laplace mechanism is constructed with $\frac{d^2+d}{2}$ i.i.d random  variables of $Lap(2d/n\epsilon)$. Using the tail bound for an ensemble matrix in~\cite{terence2012topics}, we have that the spectral norm of the noise matrix in Algorithm~\ref{alg:lap} satisfies $\norm{L}_2=O(2d\sqrt{d}/n\epsilon)$ with high probability.

Then we turn to analyze the Wishart mechanism. We use the tail bound of the Wishart distribution in~\cite{zhu2012short}:
\begin{mylem}[Tail Bound of Wishart Distribution] \label{lem:02}
	Let $W\sim W_d(m,C)$. Then for $\theta\geq 0$, with probability at most $d\exp(-\theta)$,
	\begin{equation*}
	\lambda_1(W)\geq(m+\sqrt{2m\theta(r+2)}+2\theta r)\lambda_1(C)\,
	\end{equation*}
	where $r=\tr(C)/\norm{C}_2$.
\end{mylem}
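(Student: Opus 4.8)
The plan is to prove the tail bound in Lemma~\ref{lem:02} by the matrix Laplace-transform (Chernoff) method, combining the representation of a Wishart matrix as a sum of independent rank-one Gaussian outer products with an exact evaluation of a single-summand matrix moment generating function. Write $\lambda_1(C)=\norm{C}_2$ and $r=\tr(C)/\norm{C}_2$. For integer $m$ use $W\stackrel{d}{=}\sum_{i=1}^{m}g_ig_i^{T}$ with $g_1,\dots,g_m$ i.i.d.\ $N(0,C)$; for a general real $m>d-1$ the same argument runs on the Wishart density directly, whose Laplace transform $\mathbb{E}\exp(\tr(\Theta W))=|I_d-2C\Theta|^{-m/2}$ has the identical structure. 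The starting point is the standard inequality, valid for every $s>0$:
\[
\Pr\{\lambda_1(W)\ge t\}\;\le\;e^{-st}\,\mathbb{E}\,\tr\exp(sW).
\]

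Because $W$ is a sum of independent positive semidefinite matrices, Lieb's concavity theorem (the subadditivity of the matrix cumulant generating function underlying the Ahlswede--Winter/Tropp method) gives $\mathbb{E}\,\tr\exp(sW)\le\tr\exp\bigl(m\log\mathbb{E}\exp(sgg^{T})\bigr)$ for $g\sim N(0,C)$. The per-summand matrix can be computed in closed form: since $\exp(sgg^{T})=I_d+\frac{\exp(s\norm{g}_2^2)-1}{\norm{g}_2^2}\,gg^{T}$, working in the eigenbasis of $C$ (with eigenvalues $\lambda_1\ge\cdots\ge\lambda_d$) and using elementary Gaussian identities such as $\mathbb{E}[y^2e^{ay^2}]=(1-2a)^{-3/2}$ shows that $\mathbb{E}\exp(sgg^{T})$ is diagonal there, with $k$-th eigenvalue
\[
\nu_k(s)=1+\lambda_k\!\int_0^{s}\!\frac{\phi(u)}{1-2u\lambda_k}\,du,\qquad \phi(u)=\prod_{j=1}^{d}(1-2u\lambda_j)^{-1/2},
\]
valid whenever $2s\lambda_1<1$. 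Hence $\mathbb{E}\,\tr\exp(sW)\le\sum_{k=1}^{d}\nu_k(s)^m\le d\,\nu_1(s)^m$, so $\Pr\{\lambda_1(W)\ge t\}\le d\exp\bigl(m\log\nu_1(s)-st\bigr)$.

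It remains to control $\log\nu_1(s)$ in a Bernstein-type form and to optimize over $s$. Using $-\log(1-x)\le x/(1-x)$ gives $\phi(u)\le\exp\bigl(u\tr(C)/(1-2s\lambda_1)\bigr)$ on $[0,s]$, and the substitution $w=2u\lambda_1/(1-2u\lambda_1)$ makes the integral defining $\nu_1$ explicit, yielding $\nu_1(s)\le 1+(e^{\gamma r/2}-1)/r$ with $\gamma=2s\lambda_1/(1-2s\lambda_1)$; consequently $m\log\nu_1(s)-st\le \tfrac{m}{r}(e^{\gamma r/2}-1)-\tfrac{\gamma}{2(1+\gamma)}\tau$ with $\tau=t/\norm{C}_2$. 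Substituting the target value $\tau=m+\sqrt{2m\theta(r+2)}+2\theta r$ and choosing $s$ so that $\gamma$ balances the linear and exponential contributions ($\gamma\asymp\sqrt{\theta/(mr)}$ when $\theta$ is small, with $2s\lambda_1$ bounded away from $1$ when $\theta$ is large) should drive this expression below $-\theta$, giving $\Pr\{\lambda_1(W)\ge t\}\le d\exp(-\theta)$.

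The main obstacle is exactly this last optimization: the constants $\sqrt{2m\theta(r+2)}$ and $2\theta r$ look tight, so recovering them demands a careful choice of $s$ together with non-wasteful estimates of $\phi(u)$ and $\log\nu_1(s)$, all the while keeping $s$ inside the domain $2s\lambda_1<1$ on which the moment generating function is finite; in effect one is proving a Wishart analogue of the Laurent--Massart $\chi^2$ tail inequality, with the effective dimension $r=\tr(C)/\norm{C}_2$ playing the role of the number of degrees of freedom. A secondary, purely technical issue is passing from integer $m$ to all real $m>d-1$, which I would resolve by running the Laplace-transform computation on the Wishart density (equivalently, splitting $W$ into an integer-order Wishart plus an independent low-rank Wishart) instead of on the Gaussian-sum representation.
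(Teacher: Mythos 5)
The paper never proves this lemma --- it is imported verbatim from the cited note of Zhu (2012) --- so there is no internal proof to compare against; your proposal has to stand on its own. Its skeleton is the right one (and, as far as I can tell, the same one the cited source uses): the matrix Chernoff bound $\Pr\{\lambda_1(W)\ge t\}\le e^{-st}\,\mathbb{E}\,\tr\exp(sW)$, Tropp/Ahlswede--Winter subadditivity to reduce to a single summand, and the exact computation of $\mathbb{E}\exp(sgg^T)$ for $g\sim N(0,C)$. I checked the per-summand computation and it is correct: $\mathbb{E}\exp(sgg^{T})$ is indeed diagonal in the eigenbasis of $C$ with $k$-th eigenvalue $1+\lambda_k\int_0^s\phi(u)(1-2u\lambda_k)^{-1}du$, and the largest eigenvalue is $\nu_1(s)$, so $\mathbb{E}\,\tr\exp(sW)\le d\,\nu_1(s)^m$ follows.

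The genuine gap is the step you yourself flag as "the main obstacle": the statement to be proved carries the \emph{specific} constants $m+\sqrt{2m\theta(r+2)}+2\theta r$, and those constants are produced entirely by the final optimization over $s$, which you do not carry out. Worse, the intermediate estimates you commit to already forfeit them: bounding $\log\nu_1\le\nu_1-1$, replacing $\phi(u)$ by $\exp\bigl(u\,\tr(C)/(1-2s\lambda_1)\bigr)$, and pulling $(1-2u\lambda_1)^{-1}$ out of the integral each cost a constant factor, and if you push the resulting expression $\tfrac{m}{r}(e^{\gamma r/2}-1)-\tfrac{\gamma\tau}{2(1+\gamma)}$ through the quadratic regime you land at a threshold of order $8\theta+2\sqrt{\theta m(r+2)}$ rather than $2\theta r+\sqrt{2\theta m(r+2)}$ --- the same shape but strictly weaker, so the lemma as stated is not established. (For the paper's purposes this hardly matters, since only $\lambda_1(W)=O(d\log d\,\lambda_1(C))$ is ever used, but a blind proof of the stated inequality must recover the stated constants.) A secondary flaw: your fallback for non-integer $m$ via splitting $W$ into an integer-order Wishart plus an independent "fractional" Wishart does not work, because $W_d(\alpha,C)$ does not exist for non-integer $\alpha<d-1$ (Gindikin); you would have to run the Laplace-transform argument on the density itself, and note that $\mathbb{E}\,\tr\exp(sW)$ is not the scalar Laplace transform $\lvert I_d-2sC\rvert^{-m/2}$, so that route also needs more than you have written. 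Since the paper only invokes the lemma with $m=d+1$, the cleanest fix is to prove it for integer $m$ and say so.
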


In our settings that $r=d$ and $m=d+1$, we thus have that with probability at most $d\exp(-\theta)$, 
\begin{equation*}
\lambda_1(W)\geq(d+1+\sqrt{2(d+1)(d+2)\theta}+2\theta d)\lambda_1(C).
\end{equation*}

Let $\theta=c\log d(c>1)$. Then $d\exp(-\theta)=d^{1-c}$. So we can say with high probability
\begin{equation*}
\lambda_1(W)=O([d+1+\sqrt{2(d+1)(d+2)\theta}+2\theta d]\lambda_1(C)).
\end{equation*}
For convenience, we write
\begin{equation*}
\lambda_1(W)=O(d\log d\lambda_1(C))=O(3d\log d/2n\epsilon).
\end{equation*}

We can see that the spectral norm of noise matrix generated by the Wishart mechanism is $O(d\log d/n\epsilon)$
while the Laplace mechanism requires $O(d\sqrt{d}/n\epsilon)$. This implies that the Wishart mechanism adds less noise to obtain privacy guarantee. We list the present four input perturbation approaches for comparison. Compared to the state-of-the-art results about $(\epsilon,\delta)$ case~\cite{dwork2014analyze}, our noise magnitude of $O(\frac{d\log d}{n\epsilon})$ is obviously worse than their $O(\frac{\sqrt{d}}{n\epsilon})$. It can be seen as the utility gap between $(\epsilon,\delta)$-DP and $(\epsilon,0)$-DP.

\begin{table}[htbp]
	\centering
	\caption{Spectral norm of noise matrix in input perturbation.}	
	\begin{tabular}{|c|c|c|}
		\hline
		\hline
		Approach & Noise magnitude & Privacy \\
		\hline
		Laplace & $O(d\sqrt{d}/n\epsilon)$ & $(\epsilon,0)$ \\
		\hline
		\cite{blum2005practical} & $O(d\sqrt{d\log d}/n\epsilon)$ & $(\epsilon,\delta)$ \\
		\hline
		Wishart & $O(d\log d/n\epsilon)$ & $(\epsilon,0)$ \\
		\hline
		\cite{dwork2014analyze} & $O(\sqrt{d}/n\epsilon)$ & $(\epsilon,\delta)$ \\
		\hline
	\end{tabular}
\end{table}

\subsubsection{General framework}

We are talking about the intrinsic difference between the Laplace and Wishart mechanisms.
The key element is the difference matrix $\Delta$ of two adjacent matrices.
Laplace mechanism adds a noise matrix according to the $\ell_1$ sensitivity, which equals to $\max\norm{\Delta}_{1, 1}$. Thus, the spectral norm of noise matrix is $O(\max\norm{\Delta}_{1,1}\sqrt{d}/n\epsilon)$. When it comes to the Wishart mechanism, the magnitude of noise is determined by $\norm{C}_2$. For purpose of satisfying privacy guarantee, we take $\norm{C}_2=\omega(\max\norm{\Delta}_*/n\epsilon)$.
Then the spectral norm of noise matrix is $O(\max\norm{\Delta}_*d\log d/n\epsilon)$. Consequently, we obtain the following theorem.

\begin{mythm} \label{thm:comp}
	$M$ is a $d\times d$ symmetric matrix generated by some input.
	For two arbitrary adjacent inputs, the generated matrices are $M$ and $\hat{M}$.
	Let $\Delta=M-\hat{M}$. Using the Wishart mechanism to publish M in differential private manner works better if
	\begin{equation*}
	\frac{\max\norm{\Delta}_{1,1}}{\max\norm{\Delta}_*}=\omega(\sqrt{d}\log d);
	\end{equation*}
	otherwise the Laplace mechanism works better.
\end{mythm}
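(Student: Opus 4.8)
\emph{Proof plan.} This statement is a head-to-head comparison of the two noise magnitudes computed just above, so the plan is to write the spectral norm of the noise each mechanism must add to achieve $(\epsilon,0)$-privacy as a function of $\max\norm{\Delta}_{1,1}$, $\max\norm{\Delta}_*$ and $d$, and then compare the two expressions asymptotically in $d$.

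For the Laplace mechanism I would first note that $s_1(M)=\max_{d(D,\hat D)=1}\norm{M-\hat M}_{1,1}=\max\norm{\Delta}_{1,1}$, so Theorem~\ref{thm:lap} calls for i.i.d.\ $Lap(0,\lambda)$ entries with $\lambda=\Theta(\max\norm{\Delta}_{1,1}/n\epsilon)$ placed in the upper triangle of a symmetric matrix $L$; the ensemble-matrix spectral-norm tail bound of~\cite{terence2012topics}, applied exactly as in the analysis of Algorithm~\ref{alg:lap}, then gives $\norm{L}_2=O(\lambda\sqrt d)=O(\max\norm{\Delta}_{1,1}\sqrt d/n\epsilon)$ with high probability. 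For the Wishart mechanism I would rerun the proof of Theorem~\ref{thm:wishart} with $\Delta$ left arbitrary: that argument bounds the relevant likelihood ratio by $\exp[\tfrac12\norm{C^{-1}}_2\norm{\Delta}_*]$, so it suffices to choose $C$ with $d$ equal eigenvalues (making $r=\tr(C)/\norm{C}_2=d$) and $\lambda_1(C)=\norm{C}_2=\omega(\max\norm{\Delta}_*/n\epsilon)$; Lemma~\ref{lem:02} with $m=d+1$, $r=d$, $\theta=c\log d$ ($c>1$) then yields $\lambda_1(W)=O(d\log d\,\lambda_1(C))=O(\max\norm{\Delta}_*\,d\log d/n\epsilon)$ with probability at least $1-d^{1-c}$.

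Finally, ``works better'' means smaller noise magnitude $\norm{\cdot}_2$, the proxy fixed at the start of the utility analysis, so the Wishart mechanism wins precisely when
\begin{equation*}
\max\norm{\Delta}_*\,d\log d/n\epsilon=o\big(\max\norm{\Delta}_{1,1}\sqrt d/n\epsilon\big),
\end{equation*}
which, after cancelling the common factor $\sqrt d/n\epsilon$, is exactly $\max\norm{\Delta}_{1,1}/\max\norm{\Delta}_*=\omega(\sqrt d\log d)$; the reverse asymptotic relation hands the advantage to Laplace. I expect the only delicate point to be the ``otherwise'' clause: the two cases are exhaustive only because the $\Theta(\sqrt d)$ inflation of the Laplace noise and the $\Theta(d\log d)$ inflation of the Wishart noise are each tight up to logarithmic factors, so I would state explicitly that the comparison is between these high-probability magnitude bounds and not between downstream PCA errors, and flag that matching lower bounds on the two noise magnitudes are what close the dichotomy.
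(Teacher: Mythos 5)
Your proposal is correct and follows essentially the same route as the paper: the paper derives this theorem directly from the two high-probability noise-magnitude bounds $O(\max\norm{\Delta}_{1,1}\sqrt{d}/n\epsilon)$ for Laplace and $O(\max\norm{\Delta}_*\,d\log d/n\epsilon)$ for Wishart (via $\norm{C}_2=\omega(\max\norm{\Delta}_*/n\epsilon)$ and Lemma~\ref{lem:02}), and compares them exactly as you do. Your closing caveat that the dichotomy is really a comparison of upper bounds is a fair observation that the paper leaves implicit.
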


\subsubsection{Top-$k$ subspace closeness}

We now conduct comparison between our mechanism and the algorithm in \cite{chaudhuri2012near}. \citeauthor{chaudhuri2012near} (2012) proposed an exponential-mechanism-based method, which outputs the top-$k$ subspace by drawing a sample from the matrix Bingham-von Mises-Fisher distribution. \citeauthor{wang2013differential} (2013) applied this algorithm to private spectral analysis on graph and showed that it outperforms the Laplace mechanism for output perturbation.
Because of the scoring function defined, it is hard to directly sample from the original Bingham-von Mises-Fisher distribution. Instead, \citeauthor{chaudhuri2012near} (2012) used Gibbs sampling techniques to reach an approximate solution. However, there is no guarantee for convergence. They check the convergence heuristically, which may affect the basic privacy guarantee.

First we provide our result on the top-$k$ subspace closeness:
\begin{mythm} \label{thm:6}
	Let $\hat{V_k}$ be the top-$k$  subspace of $A+W$ in Algorithm 2. Denote the non-noisy subspace as $V_k$ corresponding to $A$. Assume $\sigma_1\geq\sigma_2\geq\cdots\geq\sigma_d$ are singular values of $A$. If $\sigma_k-\sigma_{k+1}\geq2\norm{W}_2$, then with high probability
	\begin{equation*}
	\norm{V_k V_k^T-\hat{V}_k \hat{V}_k^T}_F\leq\frac{2\sqrt{k}\norm{W}_2}{\sigma_{k}-\sigma_{k+1}}.
	\end{equation*}
\end{mythm}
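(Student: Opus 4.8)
The plan is to derive the bound from the Davis--Kahan $\sin\Theta$ theorem applied to the symmetric matrices $A$ and $A+W$, treating $W$ as the perturbation. First I would observe that since $A$ is symmetric positive semidefinite, its singular values coincide with its eigenvalues, and the top-$k$ eigenspace is exactly the range of $V_k$. Writing $\widehat A = A + W$, Weyl's inequality gives $|\sigma_i(\widehat A) - \sigma_i| \le \norm{W}_2$ for every $i$, so under the assumption $\sigma_k - \sigma_{k+1} \ge 2\norm{W}_2$ we get a genuine eigengap for $\widehat A$ between its $k$-th and $(k{+}1)$-th eigenvalues, namely $\sigma_k(\widehat A) - \sigma_{k+1}(\widehat A) \ge \sigma_k - \sigma_{k+1} - 2\norm{W}_2 \ge 0$; more importantly the gap between the $k$-th eigenvalue of $A$ and the $(k{+}1)$-th eigenvalue of $\widehat A$ is at least $\sigma_k - \sigma_{k+1} - \norm{W}_2 \ge \tfrac12(\sigma_k-\sigma_{k+1}) > 0$, which is what the Davis--Kahan hypothesis needs.

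Next I would invoke the Davis--Kahan theorem in its projection (Frobenius-norm) form: if $P = V_k V_k^T$ and $\widehat P = \widehat V_k \widehat V_k^T$ are the spectral projectors onto the top-$k$ eigenspaces of $A$ and $\widehat A = A + W$ respectively, and $\gamma$ is the separation between the top-$k$ eigenvalues of $A$ and the bottom eigenvalues of $\widehat A$, then
\begin{equation*}
\norm{P - \widehat P}_F \le \frac{\norm{W}_F}{\gamma} \le \frac{\sqrt{k}\,\norm{W}_2}{\gamma}.
\end{equation*}
Here I would be slightly careful about which constant and which norm of $W$ the cited version produces; the cleanest route is the variant that bounds $\norm{\sin\Theta}_F$ by $\norm{E V_k}_F / \delta$ where $\delta$ is the gap and $E = W$, then use $\norm{P-\widehat P}_F = \sqrt{2}\,\norm{\sin\Theta}_F$ together with $\operatorname{rank}(V_k) = k$ so that $\norm{W V_k}_F \le \sqrt{k}\,\norm{W}_2$. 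Choosing $\gamma = \sigma_k - \sigma_{k+1} - \norm{W}_2$ and then crudely lower-bounding it is where the factor of $2$ in the denominator and numerator of the claimed bound will emerge; concretely $\gamma \ge \tfrac12(\sigma_k - \sigma_{k+1})$ converts $\sqrt{k}\norm{W}_2/\gamma$ into $2\sqrt{k}\norm{W}_2/(\sigma_k-\sigma_{k+1})$, matching the statement.

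Finally, the ``with high probability'' qualifier is not really part of the deterministic perturbation argument at all --- it just refers to the earlier tail bound (Lemma~\ref{lem:02}) controlling $\norm{W}_2 = O(d\log d / n\epsilon)$, under which the hypothesis $\sigma_k - \sigma_{k+1} \ge 2\norm{W}_2$ is the meaningful condition; so I would simply state that on the high-probability event where the Wishart tail bound holds, the deterministic Davis--Kahan bound above applies verbatim. The main obstacle I anticipate is purely bookkeeping: pinning down the exact constant and the exact ``$\sin\Theta$'' formulation of Davis--Kahan being used (there are several, differing by factors of $\sqrt 2$ and in whether the gap is one-sided or two-sided), and making sure the gap substitution is done consistently so that the final constant is exactly $2$ rather than, say, $\sqrt 2$ or $4$. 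There is no deep difficulty --- the whole proof is ``Weyl + Davis--Kahan + $\norm{W}_F \le \sqrt k\,\norm{W}_2$'' --- but getting the constant to match the stated inequality requires choosing the right off-the-shelf version.
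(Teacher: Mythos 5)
Your proposal is correct and follows essentially the same route as the paper: Weyl's inequality to control $\lambda_{k+1}(A+W)$, the Davis--Kahan $\sin\Theta$ theorem for the projector distance, the hypothesis $\sigma_k-\sigma_{k+1}\ge 2\norm{W}_2$ to halve the gap and produce the factor $2$, and a rank argument to pick up the $\sqrt{k}$ (the paper bounds $\norm{P_k-\hat P_k}_2$ first and then multiplies by $\sqrt{k}$, whereas you work directly in Frobenius norm via $\norm{WV_k}_F\le\sqrt{k}\norm{W}_2$ --- a cosmetic difference). The stray $\sqrt{2}$ you worry about in the norm conversion is present in the paper's own argument as well (the difference of two rank-$k$ projectors really satisfies $\norm{\cdot}_F\le\sqrt{2k}\,\norm{\cdot}_2$), so your bookkeeping concern is legitimate but does not distinguish your proof from theirs.
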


We apply the well-known Davis-Kahan $\sin\theta$ theorem~\cite{davis1963rotation} to obtain this result. This theorem characterizes the usefulness of our noisy top-$k$ subspace. Nevertheless, \citeauthor{chaudhuri2012near} (2012) only provided the utility guarantee on the principal eigenvector. So we can only compare the top-$1$ subspace closeness, correspondingly.

Before the comparison, we introduce the measure in ~\cite{chaudhuri2012near}.
\begin{mydef}
	A randomized algorithm $\mathcal{A}(\cdot)$ is an $(\rho,\eta)$-close approximation to the top eigenvector if for all data sets $\mathcal{D}$ of n points, output a vector $\hat{v_1}$ such that
	\begin{equation*}
	\mathbb{P}(\langle\hat{v_1},v_1\rangle\geq\rho)\geq 1-\eta.
	\end{equation*}
\end{mydef}

Under this measure, we derive the sample complexity of the Wishart mechanism.
\begin{mythm} \label{thm:rank1}
	If $n\geq \frac{3(d+1+\sqrt{2(d+1)(d+2)\log \frac{d}{\eta}}+2d\log \frac{d}{\eta} )}{2\epsilon(1-\rho^2)(\lambda_{1}-\lambda_{2})}$ and $\rho\geq\frac{\sqrt{2}}{2}$,
	then the Wishart mechanism is a $(\rho,\eta)$-close approximation to PCA.
\end{mythm}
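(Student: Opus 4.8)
The plan is to chain together the top-$1$ case of Theorem~\ref{thm:6} with the high-probability tail bound for the Wishart spectral norm developed just above Lemma~\ref{lem:02}, and then solve for the sample size $n$ that makes the resulting inner-product bound exceed $\rho$. First I would specialize Theorem~\ref{thm:6} to $k=1$: writing $v_1$ and $\hat v_1$ for the leading eigenvectors of $A$ and $A+W$, the bound reads $\norm{v_1 v_1^T - \hat v_1 \hat v_1^T}_F \leq 2\norm{W}_2/(\sigma_1-\sigma_2)$, valid on the event $\sigma_1-\sigma_2 \geq 2\norm{W}_2$. Next I would convert the projector distance into the inner product $\langle \hat v_1, v_1\rangle$. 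For unit vectors one has $\norm{v_1 v_1^T - \hat v_1 \hat v_1^T}_F^2 = 2\bigl(1 - \langle \hat v_1, v_1\rangle^2\bigr)$, so $\langle \hat v_1, v_1\rangle \geq \rho$ is equivalent to $\norm{v_1 v_1^T - \hat v_1 \hat v_1^T}_F^2 \leq 2(1-\rho^2)$, i.e. to $4\norm{W}_2^2/(\sigma_1-\sigma_2)^2 \leq 2(1-\rho^2)$, which rearranges to $\norm{W}_2 \leq \tfrac{\sqrt{2}}{2}\sqrt{1-\rho^2}\,(\sigma_1-\sigma_2)$. (The hypothesis $\rho \geq \sqrt{2}/2$ enters here: it guarantees $\sqrt{2(1-\rho^2)} \leq 1$, so the projector-distance requirement is at least as strong as the gap condition $\sigma_1-\sigma_2 \geq 2\norm{W}_2$ needed to invoke Theorem~\ref{thm:6}, letting a single inequality on $\norm{W}_2$ control everything.)

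Then I would bring in the tail bound. Taking $\theta = \log(d/\eta)$ in Lemma~\ref{lem:02} with $m=d+1$, $r=d$, and $\lambda_1(C) = \tfrac{3}{2n\epsilon}$, the event
\[
\norm{W}_2 = \lambda_1(W) \leq \bigl(d+1+\sqrt{2(d+1)(d+2)\log(d/\eta)} + 2d\log(d/\eta)\bigr)\cdot \frac{3}{2n\epsilon}
\]
holds with probability at least $1 - d\exp(-\theta) = 1-\eta$. So it suffices to choose $n$ large enough that the right-hand side above is at most $\tfrac{\sqrt{2}}{2}\sqrt{1-\rho^2}\,(\lambda_1-\lambda_2)$ (using $\sigma_i = \lambda_i$ for the PSD matrix $A$). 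Solving this inequality for $n$ gives exactly
\[
n \geq \frac{3\bigl(d+1+\sqrt{2(d+1)(d+2)\log(d/\eta)} + 2d\log(d/\eta)\bigr)}{2\epsilon(1-\rho^2)(\lambda_1-\lambda_2)}
\]
up to the harmless replacement of the factor $\sqrt{2}\,\sqrt{1-\rho^2}$ in the denominator by $1-\rho^2$ — one absorbs the $\sqrt{2}$ into the constant and uses $\sqrt{1-\rho^2} \geq 1-\rho^2$ to strengthen rather than weaken the condition. On that event the inner-product bound $\langle \hat v_1, v_1\rangle \geq \rho$ holds, which is precisely the $(\rho,\eta)$-closeness claim.

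The main obstacle is bookkeeping rather than depth: making sure the probability budget of $\eta$ is spent once (the Wishart tail is the only randomness, so no union bound with a second bad event is needed), and verifying that the gap prerequisite of Theorem~\ref{thm:6} is genuinely implied by the $\rho \geq \sqrt 2/2$ regime so that the two bounds compose cleanly. A secondary subtlety is the identity relating $\norm{v_1v_1^T - \hat v_1\hat v_1^T}_F$ to $\langle \hat v_1, v_1\rangle$; I would derive it directly by expanding the Frobenius norm as $\tr\!\bigl((v_1v_1^T-\hat v_1\hat v_1^T)^2\bigr) = 2 - 2\langle v_1,\hat v_1\rangle^2$, so no appeal to a $\sin\theta$-to-inner-product dictionary beyond elementary linear algebra is required. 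Everything else — the choice $\theta = \log(d/\eta)$ and the algebraic rearrangement for $n$ — is routine.
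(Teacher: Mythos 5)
Your proposal is correct and follows essentially the same route as the paper's own proof: specialize Theorem~\ref{thm:6} to $k=1$, use $\norm{v_1v_1^T-\hat v_1\hat v_1^T}_F^2 = 2\bigl(1-(v_1^T\hat v_1)^2\bigr)$, invoke the Wishart tail bound with $\theta=\log(d/\eta)$, and solve for $n$, with $\rho\ge\sqrt{2}/2$ ensuring the spectral-gap prerequisite holds. If anything, your bookkeeping is slightly cleaner than the paper's, which conflates $\norm{\cdot}_F$ with $\norm{\cdot}_F^2$ and effectively linearizes the bound $2(1-(v_1^T\hat v_1)^2)\le 2\norm{W}_2/(\lambda_1-\lambda_2)$ where you keep the sharper quadratic form; both yield the stated sample-complexity condition.
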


Because a useful algorithm should output an eigenvector making $\rho$ close to 1, our condition of $\rho\geq\frac{\sqrt{2}}{2}$ is quite weak. Comparing to the sample complexity bound of the algorithm in ~\cite{chaudhuri2012near}:
\begin{mythm}
	If $n\geq \frac{d}{\epsilon(1-\rho)(\lambda_1-\lambda_2)}(\frac{\log\frac{1}{\eta}}{d}+\log \frac{4\lambda_1}{(1-\rho^2)(\lambda_1-\lambda_2)})$,
	then the algorithm in ~\cite{chaudhuri2012near} is a $(\rho,\eta)$-close approximation to PCA.
\end{mythm}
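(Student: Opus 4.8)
The plan is to establish that, under the stated lower bound on $n$, the output $\hat v_1$ of Algorithm~\ref{alg:wishart} satisfies $\langle\hat v_1,v_1\rangle\ge\rho$ with probability at least $1-\eta$. I would convert the spectral-norm control on the Wishart noise into a bound on the angle between $v_1$ and $\hat v_1$, choosing the tail parameter so the failure probability is exactly $\eta$. \textbf{Step 1 (instantiate the tail bound).} In Algorithm~\ref{alg:wishart} the noise is $W\sim W_d(d+1,C)$ with $C=\frac{3}{2n\epsilon}I_d$, so $m=d+1$, $r=\tr(C)/\norm{C}_2=d$, $\lambda_1(C)=\frac{3}{2n\epsilon}$, and, since $W\succeq 0$, $\norm{W}_2=\lambda_1(W)$. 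Taking $\theta=\log\frac{d}{\eta}$ in Lemma~\ref{lem:02} makes the failure probability $d\exp(-\theta)$ equal to $\eta$, so with probability at least $1-\eta$,
\[
\norm{W}_2<\Bigl(d+1+\sqrt{2(d+1)(d+2)\log\tfrac{d}{\eta}}+2d\log\tfrac{d}{\eta}\Bigr)\frac{3}{2n\epsilon}.
\]
The hypothesis on $n$ is exactly what forces this right-hand side to be at most $(1-\rho^2)(\lambda_1-\lambda_2)$, so on this event $\norm{W}_2\le(1-\rho^2)(\lambda_1-\lambda_2)$; and since $\rho\ge\frac{\sqrt2}{2}$ gives $1-\rho^2\le\frac12$, we also obtain $\norm{W}_2\le\frac12(\lambda_1-\lambda_2)$, so the noise is dominated by the eigengap and the leading eigenvector of $\hat A=A+W$ is well separated.

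\textbf{Step 2 (eigenvector perturbation via positive semidefiniteness).} I would show that $\norm{W}_2\le(1-\rho^2)(\lambda_1-\lambda_2)$ implies $\langle v_1,\hat v_1\rangle^2\ge\rho^2$. The key point is that $W\succeq 0$ removes the usual factor of two: on one hand $\lambda_1(\hat A)\ge v_1^T\hat A v_1=\lambda_1+v_1^TWv_1\ge\lambda_1$; on the other hand, expanding $\hat v_1=\sum_i c_i v_i$ in the eigenbasis of $A$ and using $\hat v_1^TW\hat v_1\le\norm{W}_2$,
\[
\lambda_1(\hat A)=\hat v_1^TA\hat v_1+\hat v_1^TW\hat v_1\le\lambda_2+(\lambda_1-\lambda_2)c_1^2+\norm{W}_2 .
\]
Combining the two inequalities gives $c_1^2\ge 1-\frac{\norm{W}_2}{\lambda_1-\lambda_2}\ge\rho^2$, where $c_1=\langle v_1,\hat v_1\rangle$. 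Fixing the sign of $\hat v_1$ so that $c_1\ge 0$ --- unambiguous because $c_1^2\ge\rho^2\ge\frac12$ --- yields $\langle\hat v_1,v_1\rangle\ge\rho$. (One could instead invoke Theorem~\ref{thm:6} with $k=1$ together with $\norm{v_1v_1^T-\hat v_1\hat v_1^T}_F^2=2(1-\langle v_1,\hat v_1\rangle^2)$, but that route loses a constant and does not reproduce the stated bound.)

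\textbf{Step 3 (conclude).} The event of Step~1 holds with probability at least $1-\eta$, and on it Step~2 gives $\langle\hat v_1,v_1\rangle\ge\rho$; hence $\mathbb{P}(\langle\hat v_1,v_1\rangle\ge\rho)\ge 1-\eta$, which is the definition of a $(\rho,\eta)$-close approximation to the top eigenvector. The main obstacle is Step~2: a black-box use of Davis--Kahan (or of Theorem~\ref{thm:6}) produces a bound scaling with $\norm{W}_2^2/(\lambda_1-\lambda_2)^2$ or carrying an extra factor $2$, which would inflate the required sample size and replace the clean $1-\rho^2$ with $\sqrt{1-\rho^2}$. Exploiting that the Wishart noise is positive semidefinite --- so it can only push the top eigenvalue up and a one-sided argument applies --- is precisely what makes the factor $2\epsilon$ and the $(1-\rho^2)$ dependence tight; everything else is substituting $\theta=\log(d/\eta)$ and rearranging.
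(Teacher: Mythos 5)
There is a fundamental mismatch: you have proved the wrong statement. The theorem in question is \emph{not} about the Wishart mechanism at all --- it asserts a sample-complexity bound for the algorithm of Chaudhuri et al.\ (2012), i.e.\ the exponential-mechanism method that samples the top eigenvector from a matrix Bingham--von Mises--Fisher distribution. The bound $n\geq \frac{d}{\epsilon(1-\rho)(\lambda_1-\lambda_2)}\bigl(\frac{\log\frac{1}{\eta}}{d}+\log \frac{4\lambda_1}{(1-\rho^2)(\lambda_1-\lambda_2)}\bigr)$ has a completely different form from the Wishart bound (note the $\log\frac{\lambda_1}{\lambda_1-\lambda_2}$ factor and the absence of any Wishart tail quantities), and the paper states it purely for comparison, quoting it from the cited work without proof. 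Your entire argument --- instantiating Lemma~\ref{lem:02} with $C=\frac{3}{2n\epsilon}I_d$, setting $\theta=\log\frac{d}{\eta}$, and bounding $\norm{W}_2$ --- is an analysis of Algorithm~\ref{alg:wishart}, which is the content of the \emph{neighboring} Theorem~\ref{thm:rank1}, not of this statement. A proof of the actual statement would have to go through the exponential mechanism's utility guarantee: bound the probability, under the Bingham--von Mises--Fisher density with parameter $\frac{n\epsilon}{2}A$, that the sampled unit vector $\hat v_1$ has score $\hat v_1^TA\hat v_1$ far from $\lambda_1$, and then convert the score gap into the inner-product condition $\langle\hat v_1,v_1\rangle\geq\rho$ via the eigengap. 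None of that machinery appears in your write-up.

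As a side remark, read as a proof of Theorem~\ref{thm:rank1} your argument is essentially sound and your Step~2 is actually a cleaner route than the paper's: the one-sided perturbation bound $c_1^2\geq 1-\frac{\norm{W}_2}{\lambda_1-\lambda_2}$, obtained by exploiting $W\succeq 0$ so that $\lambda_1(\hat A)\geq\lambda_1$, avoids the Frobenius/spectral-norm bookkeeping in the paper's derivation via Theorem~\ref{thm:6}. But that does not rescue the submission here, because the claim you were asked to prove concerns a different algorithm with a different bound.
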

Our result has a factor up to $\log d$ with dropping the term $\log\frac{\lambda_1}{\lambda_1-\lambda_2}$. Actually, the relationship between $d$ and  $\frac{\lambda_1}{\lambda_1-\lambda_2}$ heavily depends on the data. Thus, as a special case of top-$k$ subspace closeness, our bound for the top-$1$ subspace is comparable to ~\citeauthor{chaudhuri2012near}'s (2012).

\subsubsection{Low rank approximation}

Here we discuss the comparison between the Wishart mechanism and privacy-preserving rank-$k$ approximation algorithm proposed in \cite{kapralov2013differentially,hardt2014noisy}. PCA can be seen as a special case of low rank approximation problems. \citeauthor{kapralov2013differentially} (2013) combined the exponential and Laplace mechanisms to design a low rank approximation algorithm for a symmetric matrix, providing strict guarantee on convergence. However, the implementation of the algorithm contains too many approximation techniques and it takes $O(d^6/\epsilon)$ time complexity while our algorithm takes $O(kd^2)$ running time. \citeauthor{hardt2014noisy} (2014) proposed an efficient meta algorithm, which can be applied to $(\epsilon,\delta)$-differentially private PCA. Additionally, they provided a $(\epsilon,0)$-differentially private version.

We need to point out that the definition of adjacent matrix in privacy-preserving low rank approximation is different from ours (our definition is the same as \cite{dwork2014analyze,chaudhuri2012near}). In the definition \cite{kapralov2013differentially,hardt2014noisy}, two matrices $A$ and $B$ are called adjacent if $\norm{A-B}_2\leq 1$, while we restrict the difference to a certain form $vv^T-\hat{v}\hat{v}^T$. In fact, we make a stronger assumption so that we are dealing with a case of less sensitivity. This difference impacts the lower bound provided in \cite{kapralov2013differentially}.

For the consistence of comparison, we remove the term $\frac{1}{n}$ in Algorithm \ref{alg:wishart}, which means we use the $XX^T$ for PCA instead of $\frac{1}{n}XX^T$.  This is also used by \citeauthor{dwork2014analyze} (2014).

Applying Lemma 1 in ~\cite{achlioptas2001fast}, we can immediately have the following theorem:
\begin{mythm} \label{thm:kappro}
	Suppose the original matrix is $A=XX^T$ and $\hat{A_k}$ is the rank-$k$ approximation of output by the Wishart mechanism.
	Denote the $k$-th largest eigenvalue of A as $\lambda_k$. Then
	\begin{equation*}
	\norm{A-\hat{A}_k}_2\leq \lambda_{k+1}+O(\frac{d\log d}{\epsilon}).
	\end{equation*}
\end{mythm}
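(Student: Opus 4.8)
The plan is to reduce the statement to a deterministic perturbation bound for low‑rank approximation and then plug in the high‑probability spectral‑norm bound for the Wishart noise that was already established above. First I would re‑examine the mechanism after the cosmetic change ``$\tfrac1n$ removed'': now $A=XX^{T}$, so the difference matrix of two adjacent inputs is $\Delta=vv^{T}-\hat v\hat v^{T}$ (no $\tfrac1n$), and the computation in the proof of Theorem~\ref{thm:wishart} still gives $\norm{\Delta}_{*}\le 3$. Re‑running that argument, the privacy requirement $\exp[\tfrac12\norm{C^{-1}}_{2}\norm{\Delta}_{*}]\le e^{\epsilon}$ is met exactly when $C$ has all eigenvalues equal to $\tfrac{3}{2\epsilon}$; I would note this explicitly so the constants downstream are traceable.

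Next I would recycle the tail‑bound computation following Lemma~\ref{lem:02}: with $m=d+1$, $r=\tr(C)/\norm{C}_2=d$, and $\lambda_1(C)=\tfrac{3}{2\epsilon}$, taking $\theta=c\log d$ with $c>1$ gives, with probability at least $1-d^{1-c}$,
\begin{equation*}
\norm{W}_{2}=\lambda_{1}(W)=O\!\left(\big[d+1+\sqrt{2(d+1)(d+2)\theta}+2\theta d\big]\lambda_{1}(C)\right)=O\!\left(\frac{d\log d}{\epsilon}\right).
\end{equation*}
This is the only probabilistic ingredient, and it is identical in form to the one used for Theorem~\ref{thm:6}; everything after this point is deterministic on the good event.

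The core step is the deterministic perturbation inequality, which is exactly Lemma~1 of \cite{achlioptas2001fast}: for $\hat A=A+W$ with best rank‑$k$ approximations $A_k,\hat A_k$, one has $\norm{A-\hat A_k}_{2}\le \norm{A-A_k}_{2}+2\norm{W}_{2}$. (If I had to reprove it in a line: $\norm{A-\hat A_k}_{2}\le \norm{A-\hat A}_{2}+\norm{\hat A-\hat A_k}_{2}=\norm{W}_{2}+\sigma_{k+1}(\hat A)$, and Weyl's inequality gives $\sigma_{k+1}(\hat A)\le\sigma_{k+1}(A)+\norm{W}_{2}$.) Since $A=XX^{T}$ is symmetric positive semidefinite, $\norm{A-A_k}_{2}=\sigma_{k+1}(A)=\lambda_{k+1}$. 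Combining with the Wishart bound on the good event yields
\begin{equation*}
\norm{A-\hat A_k}_{2}\le \lambda_{k+1}+2\norm{W}_{2}=\lambda_{k+1}+O\!\left(\frac{d\log d}{\epsilon}\right),
\end{equation*}
which is the claim.

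\textbf{Where the difficulty lies.} There is no deep obstacle here; the statement is essentially an assembly of earlier pieces. The two points that need care are purely bookkeeping: (i) correctly re‑deriving the scale matrix $C$ after dropping the $\tfrac1n$ normalization, so that the $O(d\log d/\epsilon)$ (rather than $O(d\log d/n\epsilon)$) magnitude is justified and the mechanism is still $(\epsilon,0)$‑DP for $A=XX^{T}$; and (ii) making sure the ``with high probability'' qualifier is inherited cleanly from Lemma~\ref{lem:02} and stated with an explicit failure probability $d^{1-c}$. The invocation of \cite{achlioptas2001fast} is the substantive—but standard—analytic input.
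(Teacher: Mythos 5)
Your proposal is correct and follows the same route the paper takes: the paper simply cites Lemma~1 of \cite{achlioptas2001fast} together with the $O(d\log d/\epsilon)$ spectral-norm bound on the Wishart noise (with the $\tfrac{1}{n}$ factor removed), which is exactly the deterministic perturbation inequality plus tail bound you assemble. Your one-line re-derivation via the triangle and Weyl inequalities, and your explicit bookkeeping of the rescaled scale matrix $C$ and the failure probability, only make the argument more self-contained than the paper's.
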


\citeauthor{kapralov2013differentially} (2013) provided a bound of $O(\frac{k^3 d}{\epsilon})$ and \citeauthor{hardt2014noisy}  (2014) provided $O(\frac{k^{\frac{3}{2}}d\log^2 d }{\epsilon})$ for the same scenario. If $k^3$ is larger than $\log d$, our algorithm will work better. Moreover, our mechanism has better bounds than that of \citeauthor{hardt2014noisy} (2014) while both algorithms are computationally efficient. \citeauthor{kapralov2013differentially} (2013) established a lower bound of $O(\frac{kd}{\epsilon})$ according to their definition of adjacent matrix. If replaced with our definition, the lower bound will become $O(\frac{d}{\epsilon})$. The details will be given in the in supplementary material. So our mechanism is near-optimal.

\section{Concluding Remarks}
\label{sec:conlusion}

We have studied the problem of privately publishing a symmetric matrix and provided an approach for choosing Laplace or Wishart noise properly.
In the scenario of PCA, our Wishart mechanism adds less noise than the Laplace, which leads to better utility guarantee. Compared with the privacy-preserving PCA algorithm in ~\cite{chaudhuri2012near}, our mechanism has reliable rank-$k$ utility guarantee while the former \cite{chaudhuri2012near} only has rank-1.
For rank-1 approximation we have the comparable performance on sample complexity.
Compared with the low rank approximation algorithm in \cite{kapralov2013differentially}, the bound of our mechanism does not depend on $k$.
Moreover, our method is more tractable computationally.
Compared with the tractable algorithm in ~\cite{hardt2014noisy}, our utility bound is better.

Since input perturbation only publishes the matrix for PCA, any other procedure can take the noisy matrix as input. Thus, our approach has more flexibility. While other entry-wise input perturbation techniques make the covariance not be positive semidefinite, in our case the noisy covariance matrix still preserves this property.

\section*{Acknowledgments}
We thank Luo Luo for the meaningful technical discussion. We also thank Yujun Li, Tianfan Fu for support on the early stage of the work. This work is supported by the National Natural Science Foundation of China
(No. 61572017) and the Natural Science Foundation of Shanghai City (No. 15ZR1424200). 

{
	\bibliography{literature}{}
	\bibliographystyle{aaai}
}

\newpage
\appendix
\appendixpage
\appendixpageon
\section{Proof of privacy guarantee}

The basic settings are the same as section~\ref{sec:pg}.

\subsection{Proof of Theorem ~\ref{thm:alg1}}
In order to prove Theorem~\ref{thm:alg1}, we first give the following lemma.

\begin{mylem} \label{lem:01}
	For mechanism $M(D)=\frac{1}{n}DD^T$, the $\ell_1$ sensitivity $s_1$ satisfies
	\begin{equation*}
		\frac{d}{n}<s_1(M)<\frac{2d}{n}.
	\end{equation*}
\end{mylem}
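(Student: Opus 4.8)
The plan is to rewrite the sensitivity as a purely matrix-analytic extremal problem and then prove the two inequalities by rather different arguments. Since adjacent datasets $D,\hat D$ differ in a single column, $v$ versus $\hat v$, with $\norm v_2\le 1$ and $\norm{\hat v}_2\le 1$, we have $M(D)-M(\hat D)=\tfrac1n(vv^T-\hat v\hat v^T)$, and hence
\[
s_1(M)=\frac1n\,\max_{\norm v_2\le 1,\ \norm{\hat v}_2\le 1}\norm{vv^T-\hat v\hat v^T}_{1,1},
\]
the maximum being attained because the feasible set is compact and the objective continuous. So it suffices to show $d<\max\norm{vv^T-\hat v\hat v^T}_{1,1}<2d$ (assuming $d\ge 2$; for $d=1$ the lower bound is merely an equality).

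For the upper bound I would combine three elementary estimates: the triangle inequality for $\norm{\cdot}_{1,1}$, the identity $\norm{ww^T}_{1,1}=\sum_{i,j}|w_i||w_j|=\norm w_1^2$, and the Cauchy--Schwarz bound $\norm w_1\le\sqrt d\,\norm w_2$. Together they give
\[
\norm{vv^T-\hat v\hat v^T}_{1,1}\le\norm{v}_1^2+\norm{\hat v}_1^2\le d\norm v_2^2+d\norm{\hat v}_2^2\le 2d.
\]
To make this strict I would track the equality cases: equality in Cauchy--Schwarz forces $|v_i|=|\hat v_i|=1/\sqrt d$ for every $i$ (and both vectors to have unit norm), while equality in the triangle inequality forces $v_iv_j$ and $\hat v_i\hat v_j$ to have opposite signs whenever both are nonzero; but on the diagonal $v_i^2>0$ and $\hat v_i^2>0$, a contradiction. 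Since the maximum is attained and no maximizer realizes $2d$, it is therefore $<2d$, i.e. $s_1(M)<2d/n$.

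For the lower bound, the first step is to discard one vector: taking $\hat v=0$ and $v=\tfrac1{\sqrt d}\mathbf 1$ gives $\norm{vv^T}_{1,1}=\norm v_1^2=d$, so $s_1(M)\ge d/n$. The point that needs care is strictness, because the obvious extremal candidates --- a maximally spread $v$ against $\hat v=0$, or two balanced sign vectors $\pm\tfrac1{\sqrt d}(1,\dots,1,-1,\dots,-1)$ --- all hit exactly $d$. What works is a genuinely rotated pair: for $d\ge 3$ take $v=\tfrac1{\sqrt d}\mathbf 1$ and $\hat v=e_1$, for which a one-line computation gives $\norm{vv^T-\hat v\hat v^T}_{1,1}=d+1-\tfrac2d>d$; and for $d=2$ take the orthonormal pair obtained by rotating $\{\tfrac1{\sqrt2}(1,1),\tfrac1{\sqrt2}(1,-1)\}$ by $22.5^\circ$ inside their span, which yields $2\sqrt2>2$. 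Either witness establishes $s_1(M)>d/n$.

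The main obstacle I anticipate is precisely this strict lower bound. The upper bound, including its strict version, falls out of standard norm inequalities plus a short equality-case check; but the naive lower-bound witnesses all produce exactly $d$, and one has to notice that suitably rotated unit vectors do strictly better --- verifying the value of $\norm{vv^T-\hat v\hat v^T}_{1,1}$ on such a pair is a brief case analysis over the sign pattern of the entries.
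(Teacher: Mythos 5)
Your proof is correct and follows the same skeleton as the paper's: rewrite $s_1$ as a maximization of $\frac1n\norm{vv^T-\hat v\hat v^T}_{1,1}$ over unit balls, use the triangle inequality plus Cauchy--Schwarz ($\norm{w}_1\le\sqrt d\,\norm{w}_2$) for the upper bound, and exhibit $v=\frac{1}{\sqrt d}\mathbf 1$, $\hat v=0$ for the lower bound. Where you genuinely go beyond the paper is on strictness, and this is not pedantry: the paper's lower-bound witness only yields $s_1(M)\ge d/n$, not the strict inequality claimed in the lemma, and its upper-bound chain asserts a strict triangle inequality without checking the equality case (its intermediate ``$=2\sum|p_ip_j|$'' is also only an upper bound, not an equality). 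Your equality-case analysis for the upper bound (Cauchy--Schwarz equality forces $|v_i|=|\hat v_i|=1/\sqrt d$, which clashes with the sign condition on the diagonal) is sound, and your rotated witnesses for the lower bound --- $v=\frac{1}{\sqrt d}\mathbf 1$ against $\hat v=e_1$ giving $d+1-\frac2d>d$ for $d\ge3$, and the $22.5^\circ$-rotated orthonormal pair giving $2\sqrt2>2$ for $d=2$ --- check out. Your observation that the strict lower bound fails for $d=1$ is also correct; the lemma implicitly assumes $d\ge2$. In short: same route, but your version actually proves the strict inequalities the statement asserts, whereas the paper's proof only establishes $\frac dn\le s_1(M)\le\frac{2d}{n}$ (which is all that is needed downstream for Theorem~\ref{thm:alg1}).
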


\begin{proof}
	Suppose $v=(p_1,\cdots,p_d)^T$ and $\hat{v}=(q_1,\cdots,q_d)^T$. Then the $\ell_1$ sensitivity of $M(D)$ can be converted to the following optimization problem:
	\begin{equation*}
		\begin{aligned}
			&s_1(M)=\max \frac{1}{n}\sum_{1\leq i,j\leq d} |p_i p_j-q_i q_j|, \\
			&\text{subject to} \sum_{i=1}^{d} p_i^2\leq 1, \sum_{i=1}^{d} q_i^2\leq 1.
		\end{aligned}
	\end{equation*}
	Setting $p_i=\frac{1}{\sqrt{d}}$ and $q_i=0$ for $i=1, \ldots, d$, we can have a lower bound $s_1(M)\geq \frac{d}{n}$.
	Then applying the triangle inequality,  we have the upper bound:
	\begin{equation*}
		\begin{aligned}
			&\sum_{1\leq i,j\leq d} |p_i p_j-q_i q_j| < \sum_{1\leq i,j\leq d} |p_i p_j|+|q_i q_j| \\
			&=2\sum_{1\leq i,j\leq d} |p_i p_j|\leq \frac{2d}{n}.
		\end{aligned}
	\end{equation*}
\end{proof}

Now applying Lemma~\ref{lem:01} to Theorem~\ref{thm:lap} immediately obtains the privacy guarantee for the Laplace mechanism.

\section{Proof of utility guarantee}

\subsection{Proof of Theorem ~\ref{thm:6}}
\begin{proof}
	We use the following two lemmas.
	\begin{mylem}[Davis-Kahan $\sin\theta$ theorem~\cite{davis1963rotation}] \label{lem:03} Let the $k$-th eigenvector of $A$ and $\hat{A}$ be $v_k$ and $\hat{v_k}$. Denote $P_k=\sum_{i=1}^{k}\limits v_kv_k^T$ and $\hat{P_k}=\sum_{i=1}^{k}\limits\hat{v_k}\hat{v_k}^T$. If $\lambda_k(A)>\lambda_{k+1}(\hat{A})$, then
		\begin{equation*}
			\norm{P_k-\hat{P_k}}_2\leq\frac{\norm{A-\hat{A}}_2}{\lambda_k(A)-\lambda_{k+1}(\hat{A})}.
		\end{equation*}
	\end{mylem}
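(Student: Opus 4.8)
The plan is to reconstruct the classical proof of the Davis--Kahan $\sin\theta$ theorem via a Sylvester equation, which produces exactly this one-sided-gap form. Let $V_1\in\mathbb{R}^{d\times k}$ collect the top-$k$ eigenvectors $v_1,\dots,v_k$ of $A$ as columns, so $AV_1=V_1\Lambda_1$ with $\Lambda_1=\mathrm{diag}(\lambda_1(A),\dots,\lambda_k(A))$; and let $\hat{V}_2\in\mathbb{R}^{d\times(d-k)}$ collect the bottom $d-k$ eigenvectors of $\hat{A}$, so $\hat{A}\hat{V}_2=\hat{V}_2\hat{\Lambda}_2$ with $\hat{\Lambda}_2=\mathrm{diag}(\lambda_{k+1}(\hat{A}),\dots,\lambda_d(\hat{A}))$. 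Then $P_k=V_1V_1^T$ and $I-\hat{P}_k=\hat{V}_2\hat{V}_2^T$, and since $P_k,\hat{P}_k$ are orthogonal projections of the same rank, the standard $\sin\theta$/principal-angle identity gives $\norm{P_k-\hat{P}_k}_2=\norm{(I-\hat{P}_k)P_k}_2=\norm{\hat{V}_2^TV_1}_2$; so it suffices to bound $M\triangleq\hat{V}_2^TV_1$.

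Next I would derive a Sylvester equation for $M$ by evaluating $\hat{V}_2^TAV_1$ and $\hat{V}_2^T\hat{A}V_1$: the former equals $M\Lambda_1$ and the latter equals $\hat{\Lambda}_2M$, so subtracting gives $M\Lambda_1-\hat{\Lambda}_2M=\hat{V}_2^T(A-\hat{A})V_1$. Every eigenvalue of $\Lambda_1$ is $\geq\lambda_k(A)$ and every eigenvalue of $\hat{\Lambda}_2$ is $\leq\lambda_{k+1}(\hat{A})$, so with $\delta\triangleq\lambda_k(A)-\lambda_{k+1}(\hat{A})>0$ the operator $X\mapsto X\Lambda_1-\hat{\Lambda}_2X$ is invertible and one may write
\begin{equation*}
M=\int_0^\infty e^{s\hat{\Lambda}_2}\,\hat{V}_2^T(A-\hat{A})V_1\,e^{-s\Lambda_1}\,ds,
\end{equation*}
an integral that converges precisely because $\delta>0$. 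Using $\norm{e^{s\hat{\Lambda}_2}}_2\norm{e^{-s\Lambda_1}}_2\leq e^{-s\delta}$ and $\norm{\hat{V}_2^T(A-\hat{A})V_1}_2\leq\norm{A-\hat{A}}_2$ (both $\hat{V}_2$ and $V_1$ have orthonormal columns) gives $\norm{M}_2\leq\norm{A-\hat{A}}_2/\delta$, which combined with the first paragraph is exactly $\norm{P_k-\hat{P}_k}_2\leq\norm{A-\hat{A}}_2/(\lambda_k(A)-\lambda_{k+1}(\hat{A}))$.

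The one genuine obstacle is the quantitative Sylvester inversion: one must check that the displayed integral converges and delivers the sharp constant $1/\delta$, and this is the unique place where the gap hypothesis $\lambda_k(A)>\lambda_{k+1}(\hat{A})$ enters. Everything else --- the two-way evaluation of $\hat{V}_2^TAV_1$, the submultiplicativity estimate for matrices with orthonormal columns, and the equal-rank projection identity $\norm{P_k-\hat{P}_k}_2=\norm{(I-\hat{P}_k)P_k}_2$ --- is routine. If one prefers to avoid matrix exponentials, the same bound follows by noting that, since $\Lambda_1$ and $\hat{\Lambda}_2$ are diagonal, the Sylvester equation decouples into scalar equations whose denominators are differences of an eigenvalue of $\Lambda_1$ and an eigenvalue of $\hat{\Lambda}_2$, hence at least $\delta$, and then applying the Cauchy-kernel identity $1/(\lambda-\mu)=\int_0^\infty e^{-t(\lambda-\mu)}\,dt$ entrywise.
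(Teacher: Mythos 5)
The paper does not prove this lemma at all: it is imported verbatim as the Davis--Kahan $\sin\theta$ theorem with a citation to Davis and Kahan (1963) and then used as a black box in the proof of Theorem~\ref{thm:6}. What you have written is a correct, self-contained reconstruction of the classical argument in exactly the one-sided-gap form the paper states. Each step checks out: the equal-rank projection identity $\norm{P_k-\hat{P}_k}_2=\norm{(I-\hat{P}_k)P_k}_2=\norm{\hat{V}_2^TV_1}_2$ is valid because $\dim(\mathcal{U}\cap\mathcal{V}^\perp)=\dim(\mathcal{V}\cap\mathcal{U}^\perp)$ for equal-rank projections (so the two candidate maxima coincide); the Sylvester equation $M\Lambda_1-\hat{\Lambda}_2M=\hat{V}_2^T(A-\hat{A})V_1$ follows from $AV_1=V_1\Lambda_1$ and the symmetry of $\hat{A}$ (which does hold in the paper's setting, since $\hat A = A+W$ with $W$ Wishart); and because $\Lambda_1$ and $\hat{\Lambda}_2$ are diagonal with spectra separated by $\delta=\lambda_k(A)-\lambda_{k+1}(\hat{A})>0$, the equation decouples entrywise with denominators at least $\delta$, so the integral representation converges and yields $\norm{M}_2\leq\norm{A-\hat{A}}_2/\delta$. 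Two small points worth making explicit if you write this up: the hypothesis that $A$ and $\hat{A}$ are symmetric (or at least normal) is genuinely needed for the identity $\hat{V}_2^T\hat{A}=\hat{\Lambda}_2\hat{V}_2^T$, and the norm estimate $\norm{e^{s\hat{\Lambda}_2}}_2\norm{e^{-s\Lambda_1}}_2=e^{s\lambda_{k+1}(\hat{A})}e^{-s\lambda_k(A)}=e^{-s\delta}$ uses that all diagonal entries of these exponentials are positive, so the spectral norm is attained at the extreme eigenvalue on each side. Neither is a gap; your proof is sound and supplies exactly what the paper omits by citation.
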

	\begin{mylem} [Weyl's inequality] \label{lem:weyl} If $M$, $H$ and $P$ are $d \times d$ Hermitian matrices such that $M=H+P$. Let the $k$-th eigenvalues of $M$, $H$ and $P$ be $\mu_k$, $\nu_k$ and $\rho_k$, respectively. For $i\in[n]$, we have
		\begin{equation*}
			\nu_i+\rho_d\leq\mu_i\leq\nu_i+\rho_1.
		\end{equation*}
	\end{mylem}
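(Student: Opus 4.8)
The plan is to derive both inequalities from the Courant--Fischer min--max variational characterization of the eigenvalues of a Hermitian matrix. Fixing a decreasing ordering of all three eigenvalue lists ($\mu_1\geq\cdots\geq\mu_d$, and likewise for the $\nu$'s and $\rho$'s, so that $\rho_1$ is the largest and $\rho_d$ the smallest eigenvalue of $P$), recall that for any $d\times d$ Hermitian matrix $M$,
\begin{equation*}
\mu_i=\max_{\dim S=i}\ \min_{\substack{x\in S\\ \norm{x}_2=1}} x^{*}Mx,
\end{equation*}
where $S$ ranges over $i$-dimensional subspaces and $x^{*}$ is the conjugate transpose. This reduces the lemma to comparing the Rayleigh quotients of $M$, $H$, and $P$ on common subspaces.

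First I would prove the upper bound $\mu_i\leq\nu_i+\rho_1$. Splitting $x^{*}Mx=x^{*}Hx+x^{*}Px$ and using that $x^{*}Px\leq\rho_1$ for every unit vector $x$ (since $\rho_1$ is the top eigenvalue of $P$), the inner minimum over any subspace $S$ obeys $\min_{x\in S}x^{*}Mx\leq\rho_1+\min_{x\in S}x^{*}Hx$. Taking the maximum over all $i$-dimensional $S$ on both sides and applying Courant--Fischer to $H$ on the right-hand side yields exactly $\mu_i\leq\nu_i+\rho_1$.

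The lower bound $\mu_i\geq\nu_i+\rho_d$ is symmetric: now use $x^{*}Px\geq\rho_d$ for every unit $x$ to obtain $\min_{x\in S}x^{*}Mx\geq\rho_d+\min_{x\in S}x^{*}Hx$, and again pass to the maximum over $i$-dimensional subspaces. Alternatively, one can deduce it from the already-proved upper bound applied to $-M=(-H)+(-P)$ and flip signs, since the eigenvalues of $-P$ in decreasing order are $-\rho_d\geq\cdots\geq-\rho_1$.

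The only point requiring care is the bookkeeping: maintaining one consistent decreasing ordering across the three lists, and invoking Courant--Fischer in precisely the max--min form above so that the uniform Rayleigh bounds $\rho_d\leq x^{*}Px\leq\rho_1$ slot in without index shifts. The Hermitian hypothesis on $H$ and $P$ is exactly what makes these quadratic forms real and guarantees the Rayleigh-quotient bounds, so once the variational formula is in hand there is no genuine obstacle and the result follows immediately.
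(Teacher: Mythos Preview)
Your argument via the Courant--Fischer min--max principle is correct and is the standard textbook derivation of Weyl's inequality. The paper itself does not prove this lemma; it merely quotes Weyl's inequality as a classical tool and applies it in the proof of Theorem~\ref{thm:6}, so there is no ``paper's own proof'' to compare against. Your proposal therefore supplies exactly the kind of self-contained justification that the paper omits, and the bookkeeping you flag (consistent decreasing ordering, uniform Rayleigh bounds $\rho_d\leq x^{*}Px\leq\rho_1$) is handled correctly.
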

	In our case, $A$ and $\hat{A}$ are both symmetric positive semidefinite (because of the property of  Wishart distribution). So the eigenvalues equal to singular values. Then we use Lemma~\ref{lem:weyl} with $A=H$ and $W=P$. We obtain
	\begin{equation*}
		\sigma_i(A+W)\leq\sigma_i(A)+\sigma_1(W)=\sigma_i(A)+\norm{W}_2.
	\end{equation*}
	Applying Lemma\ref{lem:03} with $A=A$ and $\hat{A}=A+W$ leads to
	\begin{equation*}
		\begin{aligned}
			&\norm{P_k-\hat{P_k}}_2=\norm{V_k V_k^T-\hat{V}_k \hat{V}_k^T}_2 \\
			&\leq\frac{\norm{W}_2}{\lambda_k(A)-\lambda_{k+1}(A+W)} \\
			&\leq\frac{\norm{W}_2}{\lambda_k(A)-\lambda_{k+1}(A)-\norm{W}_2} \\
			&=\frac{\norm{W}_2}{\sigma_k-\sigma_{k+1}-\norm{W}_2}.
		\end{aligned}
	\end{equation*}
	Under the assumption $\sigma_k-\sigma_{k+1}\geq2\norm{W}_2$,  we finally have
	\begin{equation*}
		\norm{V_k V_k^T-\hat{V}_k \hat{V}_k^T}_2\leq\frac{2\norm{W}_2}{\sigma_{k}-\sigma_{k+1}}.
	\end{equation*}
	Using the property
	\begin{equation*}
	\norm{V_k V_k^T-\hat{V}_k \hat{V}_k^T}_F\leq\sqrt{k}\norm{V_k V_k^T-\hat{V}_k \hat{V}_k^T}_2
	\end{equation*}
	we finish the proof.
\end{proof}

\subsection{Proof of Theorem ~\ref{thm:rank1}}
We are going to find the condition on sample complexity to satisfy $(\rho,\eta)$-close approximation.

\begin{proof}
	Set $k=1$ in Theorem~\ref{thm:6}. Then
	\begin{equation*}
		\begin{aligned}
			\norm{V_1 V_1^T-\hat{V}_1 \hat{V}_1^T}_F & =\norm{v_1 v_1^T-\hat{v}_1 \hat{v}_1^T}_F \\
			&=\tr [(v_1 v_1^T-\hat{v}_1 \hat{v}_1^T)(v_1 v_1^T-\hat{v}_1 \hat{v}_1^T)^T] \\
			&=2-2(v_1^T \hat{v}_1)^2\leq\frac{2\norm{W}_2}{\lambda_{1}-\lambda_{2}}.
		\end{aligned}
	\end{equation*}
	The condition $\lambda_{1}-\lambda_{2}\geq 2\norm{W}_2$ requires the last term to have a upper bound of 1, which implies $|v_1^T \hat{v}_1|\geq \frac{\sqrt{2}}{2}$.
	Let $\eta=d\exp(-\theta)$, which is $\theta = \log \frac{d}{\eta}$,
	we have that
	with probability $1-\eta$,
	\begin{equation*}
		\begin{aligned}
			&(v_1^T \hat{v}_1)^2 \geq 1-\frac{\norm{W}_2}{\lambda_{1}-\lambda_{2}} \\
			&= 1-\frac{(d+1+\sqrt{2(d+1)(d+2)\theta}+2\theta d)\lambda_1(C)}{\lambda_{1}-\lambda_{2}} \\
			&=1-\frac{3(d+1+\sqrt{2(d+1)(d+2)\log \frac{d}{\eta}}+2d\log \frac{d}{\eta} )}{2n\epsilon(\lambda_{1}-\lambda_{2})}.
		\end{aligned}
	\end{equation*}
	
	Under the condition \\ $n\geq \frac{3(d+1+\sqrt{2(d+1)(d+2)\log \frac{d}{\eta}}+2d\log \frac{d}{\eta} )}{2\epsilon(1-\rho^2)(\lambda_{1}-\lambda_{2})}$
	
	\begin{equation*}
		\begin{aligned}
			(v_1^T \hat{v}_1)^2 & \geq 1-(1-\rho^2)=\rho^2
		\end{aligned}
	\end{equation*}	
	
	Which yields $\Pr(v_1^T \hat{v}_1\geq \rho) \geq 1-\eta$.
\end{proof}

\section{Lower bound for low rank approximation}
We mainly follow the construction of \citeauthor{kapralov2013differentially} (2013) and make a slight modification to fit into our definition of adjacent matrices.
\begin{mylem} \label{lem:pack}
	Define $C^k_\delta(Y)=\{S\in \textbf{G}_{k,d}:\norm{YY^T-SS^T}_2\leq\delta\}$.
	For each $\delta>0$ there exists family $\mathcal{F}=\{Y^1,\cdots,Y^N\}$ with $N=2^{\Omega(k(d-k)\log 1/\delta)}$, where $Y^i\in \textbf{G}_{k,d}$ such that $C^k_\delta(Y^i)\cap C^k_\delta(Y^j)=\emptyset$ for $i\neq j$.
\end{mylem}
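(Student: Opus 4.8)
The plan is to read Lemma~\ref{lem:pack} as a metric-entropy (packing) statement for the Grassmannian $\textbf{G}_{k,d}$ of $k$-dimensional subspaces of $\mathbb{R}^d$, equipped with the projection metric $d_{\mathrm{proj}}(Y,S)\triangleq\norm{YY^T-SS^T}_2$, and to establish it by a volumetric argument performed in an explicit coordinate chart. The first observation is that $d_{\mathrm{proj}}$ really is a metric on $\textbf{G}_{k,d}$: $YY^T$ and $SS^T$ are the orthogonal projectors onto the corresponding subspaces, so $d_{\mathrm{proj}}$ does not depend on the choice of orthonormal basis of each subspace, it is symmetric, and it inherits the triangle inequality from the spectral norm. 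Consequently, if $d_{\mathrm{proj}}(Y^i,Y^j)>2\delta$ then any common point $S$ of $C^k_\delta(Y^i)$ and $C^k_\delta(Y^j)$ would force $d_{\mathrm{proj}}(Y^i,Y^j)\le d_{\mathrm{proj}}(Y^i,S)+d_{\mathrm{proj}}(S,Y^j)\le 2\delta$, a contradiction. Hence it suffices to exhibit a subset of $\textbf{G}_{k,d}$ of size $N=2^{\Omega(k(d-k)\log(1/\delta))}$ that is $2\delta$-separated in $d_{\mathrm{proj}}$ (a routine shrinking of $\delta$ by an arbitrarily small factor handles the closed-ball boundary case).

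To build such a set I would use the graph chart. Fix the reference subspace $V=\mathrm{span}(e_1,\dots,e_k)$; each $k\times(d-k)$ matrix $B$ with $\norm{B}_2$ small determines the subspace $Y_B$ spanned by the columns of $\begin{bmatrix} I_k \\ B^T\end{bmatrix}$, i.e.\ the graph of the linear map $V\to V^\perp$ given by $B$. A direct computation with the projector $P_B=\begin{bmatrix} I_k \\ B^T\end{bmatrix}(I_k+BB^T)^{-1}\begin{bmatrix} I_k & B\end{bmatrix}$ shows that, on the region $\norm{B}_2\le 1/2$, the map $B\mapsto Y_B$ is bi-Lipschitz from the Frobenius metric to $d_{\mathrm{proj}}$ with absolute constants $c_1,c_2$ that do not depend on $k,d$ (at $B=0$ its differential sends $B$ to $\begin{bmatrix} 0 & B \\ B^T & 0\end{bmatrix}$, an isometry up to a factor $\sqrt{2}$); in particular $d_{\mathrm{proj}}(Y_B,Y_{B'})\ge c_1\norm{B-B'}_F$. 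Therefore a $(2\delta/c_1)$-separated subset of the Euclidean ball $\{B:\norm{B}_F\le 1/2\}\subset\mathbb{R}^{m}$, $m=k(d-k)$, maps to a $2\delta$-separated subset of $\textbf{G}_{k,d}$. A maximal $(2\delta/c_1)$-separated subset of a Euclidean $m$-ball is simultaneously a $(2\delta/c_1)$-net, so the standard volume comparison $N\,(2\delta/c_1)^m\ge (1/2)^m$ gives at least $(c_1/(4\delta))^{m}$ points; taking logarithms yields $\log N\ge m\log(c_1/(4\delta))=\Omega\!\big(k(d-k)\log(1/\delta)\big)$ once $\delta$ is below an absolute threshold, which is the regime of interest.

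The main obstacle — really the only place a genuine (though routine) computation is required — is the dimension-free bi-Lipschitz estimate on $\{\norm{B}_2,\norm{B'}_2\le 1/2\}$: one expands $P_B-P_{B'}$ using the Neumann series for $(I_k+BB^T)^{-1}$ (legitimate since $\norm{BB^T}_2\le 1/4<1$) and checks that the linear term dominates the higher-order terms uniformly in $k$ and $d$, so that the $k(d-k)$-dimensional volume scaling of the chart transfers to $d_{\mathrm{proj}}$. Everything else is the triangle inequality plus counting balls. Finally, the only modification relative to the construction of \citeauthor{kapralov2013differentially} (2013) is bookkeeping: because our notion of adjacent matrices restricts the perturbation to the form $vv^T-\hat v\hat v^T$, the relevant sensitivity is a constant factor smaller, which merely rescales $\delta$ and hence the absolute constant multiplying $\log(1/\delta)$ in the exponent — it does not affect the $k(d-k)$ factor that drives the lower bound.
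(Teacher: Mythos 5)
Your overall strategy is sound and is essentially the standard one (the paper itself does not prove this lemma --- it imports it from \citeauthor{kapralov2013differentially}, where it rests on known volume estimates for metric balls in the Grassmannian; your chart-based volumetric argument is a legitimate self-contained substitute). The reduction of disjointness to $2\delta$-separation via the triangle inequality for $d_{\mathrm{proj}}$ is correct, and the graph chart $B\mapsto Y_B$ with $P_B=M(I_k+BB^T)^{-1}M^T$ is the right coordinate system.

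However, there is a genuine error in the one step you yourself flag as the crux: the claimed lower bound $d_{\mathrm{proj}}(Y_B,Y_{B'})\ge c_1\norm{B-B'}_F$ with a dimension-free constant is false. The metric in the lemma is the \emph{spectral} norm of the difference of projectors, and your own linearization shows the differential at $B=0$ sends $B$ to $\bigl[\begin{smallmatrix}0 & B\\ B^T & 0\end{smallmatrix}\bigr]$, whose spectral norm is $\norm{B}_2$, not $\norm{B}_F$. When $B-B'$ has many comparable singular values, $\norm{B-B'}_F$ exceeds $\norm{P_B-P_{B'}}_2$ by a factor of order $\sqrt{\min(k,d-k)}$, so no absolute constant $c_1$ exists; the parenthetical ``isometry up to a factor $\sqrt 2$'' is an isometry into the Frobenius metric on projectors, not the spectral one. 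As written, your $(2\delta/c_1)$-separated Frobenius packing therefore does not transfer to a $2\delta$-separated family in $d_{\mathrm{proj}}$. The fix is easy and preserves your count: work with the operator norm throughout. The correct dimension-free estimate on $\{\norm{B}_2,\norm{B'}_2\le 1/2\}$ is $\norm{P_B-P_{B'}}_2\ge c_1\norm{B-B'}_2$, and the volumetric packing bound is norm-agnostic --- a maximal $\epsilon$-separated (in operator norm) subset of the operator-norm ball of radius $1/2$ in $\mathbb{R}^{k(d-k)}$ is an $\epsilon$-net of it, so it has at least $\mathrm{vol}(\tfrac12 K)/\mathrm{vol}(\epsilon K)=(1/2\epsilon)^{k(d-k)}$ elements for $K$ the operator-norm unit ball. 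With $\epsilon=2\delta/c_1$ this again yields $N\ge (c_1/4\delta)^{k(d-k)}$ and hence $\log N=\Omega(k(d-k)\log(1/\delta))$ for $\delta$ below an absolute threshold. With that substitution (and consistent use of $\norm{\cdot}_2$ when defining the region of the chart) your proof goes through. The closing remarks about adjacency and sensitivity are irrelevant to this purely geometric lemma and can be dropped.
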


\begin{mythm}
	Suppose the original matrix is $A=XX^T$ and $\hat{A_k}$ is the rank-$k$ approximation of output by the any $\epsilon$-differential private mechanism.
	Denote the $k$-th largest eigenvalue of A as $\lambda_k$. Then
	\begin{equation*}
		\norm{A-\hat{A_k}}_2\leq\lambda_{k+1}+\Omega(d/\epsilon).
	\end{equation*}
\end{mythm}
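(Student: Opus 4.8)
The plan is to prove this as an impossibility result by a packing argument in the style of \citeauthor{kapralov2013differentially} (2013), built on the family $\mathcal{F}=\{Y^1,\dots,Y^N\}$ of Lemma~\ref{lem:pack}, while tracking how \emph{our} column-swap notion of adjacency enters the bookkeeping --- that is precisely where their $\Omega(kd/\epsilon)$ becomes $\Omega(d/\epsilon)$. Fix a scale $\mu>0$ to be optimized. For each $Y^i\in\mathbf{G}_{k,d}$ I would construct a data matrix $X^i\in\mathbb{R}^{d\times n}$ with unit-norm columns such that $A^i:=X^i(X^i)^T$ has top-$k$ eigenspace $\mathrm{span}(Y^i)$, with the top $k$ eigenvalues all equal to $\mu$ and the rest at most $\mu/2$: take $\mu$ copies of each of the $k$ orthonormal columns of $Y^i$, so that $n=\Theta(k\mu)$ (if one insists on $\lambda_{k+1}>0$ appearing literally, add a fixed block of spectral norm $\lambda_{k+1}$ supported off $\mathrm{span}(Y^i)$; then the best rank-$k$ approximation of $A^i$ has error exactly $\lambda_{k+1}$).

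Next I would reduce accuracy to identification of $Y^i$. Suppose an $(\epsilon,0)$-DP mechanism $M$ satisfies $\mathbb{E}\norm{A^i-\hat{A}_k}_2\le \lambda_{k+1}+\gamma$ for $\hat{A}_k=M(X^i)$. By Markov, with probability at least $1/2$ we have $\norm{A^i-\hat{A}_k}_2\le\lambda_{k+1}+2\gamma$, and since $\hat{A}_k$ has rank at most $k$, Weyl's inequality (Lemma~\ref{lem:weyl}) together with the Davis--Kahan $\sin\theta$ theorem (Lemma~\ref{lem:03}), applied with the eigengap $\ge\mu/2$, force the top-$k$ subspace $\hat{S}$ of $\hat{A}_k$ to satisfy $\norm{Y^i(Y^i)^T-\hat{S}\hat{S}^T}_2\le\delta$ with $\delta=\Theta(\gamma/\mu)$; that is, $\hat{S}\in C^k_\delta(Y^i)$ with probability at least $1/2$.

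Then comes the group-privacy step, where the adjacency definition does the work. Passing from $X^1$ to $X^i$ by single-column replacements $x_l x_l^T\mapsto\hat{x}_l\hat{x}_l^T$ with unit $x_l,\hat{x}_l$ --- exactly the moves our definition allows --- requires, in the worst case over $i$, changing all $n=\Theta(k\mu)$ columns, so $X^1$ and $X^i$ lie at distance $O(k\mu)$ in the adjacency graph (this uniform $\Theta(k\mu)$, rather than $\Theta(\mu)$, is the factor that weakens the bound relative to the spectral-norm adjacency of \cite{kapralov2013differentially}). Telescoping the $(\epsilon,0)$-DP inequality along such a path,
\begin{equation*}
\Pr[M(X^1)\in C^k_\delta(Y^i)]\ \ge\ e^{-O(\epsilon k\mu)}\,\Pr[M(X^i)\in C^k_\delta(Y^i)]\ \ge\ \tfrac12 e^{-O(\epsilon k\mu)}.
\end{equation*}
The sets $C^k_\delta(Y^i)$ are pairwise disjoint (Lemma~\ref{lem:pack}) and the probabilities sum to at most $1$, so summing over $i$ gives $N\le 2e^{O(\epsilon k\mu)}$; combined with $\log N=\Omega(k(d-k)\log(1/\delta))$ this yields $\epsilon\mu=\Omega((d-k)\log(1/\delta))$. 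Substituting $\delta=\Theta(\gamma/\mu)$ turns this into $\gamma\ge\mu\,e^{-O(\epsilon\mu/(d-k))}$, and maximizing the right-hand side over $\mu$ (the optimum is at $\mu\asymp(d-k)/\epsilon$, which also sets $n\asymp k(d-k)/\epsilon$) forces $\gamma=\Omega((d-k)/\epsilon)=\Omega(d/\epsilon)$ for, say, $k\le d/2$. Hence no $(\epsilon,0)$-DP mechanism can guarantee expected error below $\lambda_{k+1}+\Omega(d/\epsilon)$ over the whole family $\{X^i\}$, which is the assertion.

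The main obstacle, and the only genuine departure from \cite{kapralov2013differentially}, is the accounting forced by our adjacency relation: one must (i) check the $X^i$ are realizable with unit-norm columns and only $\Theta(k\mu)$ of them, (ii) establish the \emph{uniform} column-swap distance $d(X^i,X^j)=\Theta(k\mu)$ between any two members of the family (the extra $k$ here is exactly what turns $\Omega(kd/\epsilon)$ into $\Omega(d/\epsilon)$, consistently with the ``less sensitivity'' of our model), and (iii) make the perturbation step quantitative, so that a $\gamma$-accurate rank-$k$ output really lands in the $\delta$-ball with $\delta=\Theta(\gamma/\mu)$ (matching constants with the eigengap $\mu/2$ so the Davis--Kahan hypothesis holds at the optimizing $\mu$). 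The remaining ingredients --- the packing itself, the telescoping of the privacy inequality, the union bound over the $N$ disjoint balls, and the one-variable optimization in $\mu$ --- are routine. Dropping the $1/n$ normalization, as already done for Theorem~\ref{thm:kappro}, is what makes $\mu$ and $\lambda_{k+1}$ comparable to the $O(d\log d/\epsilon)$ upper-bound scale.
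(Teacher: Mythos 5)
Your proposal is correct and follows essentially the same route as the paper's own argument: the packing family of Lemma~\ref{lem:pack}, datasets realizing $A^i=\gamma Y^i (Y^i)^T$ as $\Theta(k\gamma)$ unit columns, group privacy over the $\Theta(k\gamma)$ column swaps (which is exactly where the $\Omega(kd/\epsilon)$ of \citeauthor{kapralov2013differentially} degrades to $\Omega(d/\epsilon)$), and the disjointness/counting step forcing $\gamma\epsilon=\Omega((d-k)\log(1/\delta))$. You are in fact more explicit than the paper on two points it glosses over --- the Davis--Kahan reduction from spectral-norm accuracy to membership in $C^k_\delta(Y^i)$, and the final optimization over the scale $\mu$ --- but these are elaborations of the same proof, not a different one.
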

\begin{proof}
	Take a set $\mathcal{F}=\{Y^1,\cdots,Y^N\}$ in Lemma~\ref{lem:pack}.
	Construct a series of matrices $A^i=\gamma Y_iY_i^T$ where $i\in[N]$. Then
	\begin{equation*}
		E_{A^i} \Big[\norm{A^i-\hat{A^i_k}}_2 \Big] \leq\delta\gamma.
	\end{equation*}
	Let $\hat{A^i_k}=\hat{Y_i}\hat{\Sigma}\hat{Y_i}^T$.  Then letting $\widetilde A^i_k=\hat{Y_i}\hat{Y_i}^T$, we have
	\begin{equation*}
		E_{A^i}\Big[\norm{A^i-\widetilde A^i_k}_2 \Big] \leq 2\delta\gamma.
	\end{equation*}
	Using Markov's inequality leads to
	\begin{equation*}
		\Pr_{A^i} \Big[\norm{A^i-\widetilde A^i_k}_2\leq 4\delta\gamma \Big] >\frac{1}{2}.
	\end{equation*}
	Here is the main difference between our definition and \citeauthor{kapralov2013differentially} (2013). They consider the distance from $A^i$ to $A^j$ is at most $2\gamma$ since $\norm{A^i}_2\leq \gamma$. In our framework, $A^i$ is a dataset consisting of $\gamma$ data groups, each one is $Y^i$. Changing $A^i$ to $A^j$ means replacing $\gamma k$ data points with brand new ones. So we consider the distance is at most $2\gamma k$.
	
	The algorithm should put at least half of the probability mass into $C^k_{4\delta}(Y^i)$. Meanwhile, to satisfy the privacy guarantee
	\begin{equation*}
		\frac{\Pr\{M(A_i)\in C^k_{4\delta}(Y^i)\}}{\Pr\{M(A_j)\in C^k_{4\delta}(Y^i)\}}\leq e^{2\gamma k\epsilon}
	\end{equation*}
	So $\Pr\{M(A_j)\in C^k_{4\delta}(Y^i)\} \geq \frac{1}{2}e^{-2\gamma k\epsilon}$, we have
	\begin{equation*}
		\frac{1}{2}e^{-2\gamma k\epsilon}\cdot 2^{\Omega(k(d-k)\log 1/\delta)}\leq 1
	\end{equation*}
	Which implies $\gamma=\Omega(d\log(1/\delta)/\epsilon)$ and completes our proof.
\end{proof}

\end{document}